 \definecolor{BLACK}{gray}{0}
 \definecolor{WHITE}{gray}{1}
 \definecolor{RED}{rgb}{1,0,0}
 \definecolor{GREEN}{rgb}{0,1,0}
 \definecolor{BLUE}{rgb}{0,0,1}
 \definecolor{CYAN}{cmyk}{1,0,0,0}
 \definecolor{MAGENTA}{cmyk}{0,1,0,0}
 \definecolor{YELLOW}{cmyk}{0,0,1,0}
  \theoremstyle{remark}
    \newtheorem{rem}{\protect\remarkname}
    \newtheorem{rem}{\protect\remarkname}[chapter]
  \theoremstyle{plain}
    \newtheorem{lem}{\protect\lemmaname}
    \newtheorem{lem}{\protect\lemmaname}[chapter]
  \theoremstyle{plain}
    \newtheorem{prop}{\protect\propositionname}
    \newtheorem{prop}{\protect\propositionname}[chapter]
  \theoremstyle{plain}
    \newtheorem{cor}{\protect\corollaryname}
    \newtheorem{cor}{\protect\corollaryname}[chapter]
  \theoremstyle{definition}
    \newtheorem{defn}{\protect\definitionname}
    \newtheorem{defn}{\protect\definitionname}[chapter]
  \theoremstyle{plain}
  \newtheorem*{thm*}{\protect\theoremname}
  \theoremstyle{plain}
    \newtheorem{thm}{\protect\theoremname}
    \newtheorem{thm}{\protect\theoremname}[chapter]
  \theoremstyle{plain}
  \newtheorem*{prop*}{\protect\propositionname}
  \providecommand{\definitionname}{Definition}
  \providecommand{\lemmaname}{Lemma}
  \providecommand{\propositionname}{Proposition}
  \providecommand{\remarkname}{Remark}
  \providecommand{\theoremname}{Theorem}
\providecommand{\corollaryname}{Corollary}
\providecommand{\theoremname}{Theorem}
\begin{document}

\title{The Green's Function for the H{\"u}ckel (Tight Binding) Model}
\author{Ramis Movassagh}
\email{ramis.mov@gmail.com}
\selectlanguage{english}%
\affiliation{Department of Mathematics, IBM T.J. Watson Research Center, Yorktown Heights, NY 10598, USA}

\author{Gilbert Strang}
\affiliation{Department of Mathematics, Massachusetts Institute of Technology,
Cambridge MA, 02139, USA}

\author{Yuta Tsuji}
\affiliation{Education Center for Global Leaders in Molecular Systems for Devices, Kyushu University, Nishi-ku, Fukuoka 819-0395, Japan}

\author{Roald Hoffmann}
\affiliation{Department of Chemistry and Chemical Biology, Cornell University,
Ithaca, NY, 14853, USA}
\begin{abstract}
Applications of the H{\"u}ckel (tight binding) model are ubiquitous
in quantum chemistry and solid state physics. The matrix representation
of this model is isomorphic to an unoriented vertex adjacency matrix
of a bipartite graph, which is also the Laplacian matrix plus twice
the identity. In this paper, we analytically calculate the determinant
and, when it exists, the inverse of this matrix in connection with
the Green's function, $\mathbf{G}$, of the $N\times N$ H{\"u}ckel
matrix. A corollary is a closed form expression for a Harmonic sum
(Eq. \eqref{eq:Math}). We then extend the results to $d-$dimensional
lattices, whose linear size is $N$. The existence of the inverse
becomes a question of number theory. We prove a new theorem in number
theory pertaining to vanishing sums of cosines and use it to prove
that the inverse exists if and only if $N+1$ and $d$ are odd and
$d$ is smaller than the smallest divisor of $N+1$. We corroborate
our results by demonstrating the entry patterns of the Green's function
and discuss applications related to  transport and conductivity.
\end{abstract}
\maketitle
Keywords: Matrix inverses, invertibility, Green's function, vanishing
sums of cosines, quantum chemistry.

\section{Problem Statement: The H{\"u}ckel model and its Green's Function}
The H{\"u}ckel or Tight Binding model was originally introduced to
describe electron hopping on a one-dimensional chain or ring \cite{Heilbronner1976}.
It has come to serve as a ubiquitous model in solid state chemistry
and physics \cite{Heilbronner1976,ashcroft1981solid}. Two typical
forms of the H{\"u}ckel matrix, for a linear chain of $N$ atoms,
and for a cycle of $N$ atoms, are given in Eq. \eqref{eq:Cycle6Atoms}.
The resulting banded matrix is isomorphic to the vertex adjacency
matrix of a graph \cite{Gutman1986}. 

The diagonal entries of the H{\"u}ckel Hamiltonian matrix are defined by the Coulomb integral, $\alpha=\langle\chi_i|\hat{H}|\chi_i\rangle$, where $\chi_i$ is the basis function of $2p_{z} AO$ of the $i^{th}$ carbon atom. The magnitude of $\alpha$ can be approximated by the ionization potential of a carbon atom. It is reasonable to assume that all carbon atoms have the same ionization potential, resulting in an approximation that all diagonal elements have the same value, which is denoted by just $\alpha$. 

 An off-diagonal elements of the H{\"u}ckel Hamiltonian matrix is typically called a resonance integral, $\beta_{i,j}$, defined as $\beta_{i,j}=\langle\chi_i|\hat{H}|\chi_j\rangle$  . This is a measure of the interaction between the $i^{th}$ and $j^{th}$ carbon atoms. Usually $\beta_{i,j}$ is neglected if there is no bond between the $i^{th}$ and $j^{th}$ carbon atoms. It is reasonably assumed that all the carbon-carbon bonds have the same strength, resulting in an approximation that all non-zero off-diagonal elements have the same value, which is denoted by just $\beta$.

If there are no heteroatoms in the system, all Coulomb integrals can be considered identical. However, if there is a heteroatom, we need to modify the Coulomb integral properly. When one calculates a linear $\pi -$conjugated chain, namely  a polyene, all resonance integrals cannot be considered identical, if there is bond alternation, as obtains normally. We address this problem later on. In a small $\pi -$conjugated cycle, termed annulene, there is no bond alternation, so all resonance integrals can be considered identical. However, as the cycle gets large, bond alternation sets in (see refs. \cite{longuet1960alternation, wannere2004aromaticity}).
\subsection{Mathematical Statement of the Problem}
Therefore, in the simplest version of both physical and chemical models, the
matrix representation of the electronic Hamiltonian for a network
of orbitals (one per atom, say carbon $2p_{z}$), is characterized
by diagonal matrix elements $\alpha$, which we may without loss of
generality set equal to zero, and off-diagonal elements $\beta$ (or
$t$), where two atoms are neighbors. If we use units of $\beta$
($\beta$ is negative), we may replace these nearest neighbor interactions
by unity, $1$. All other (non-nearest neighbor) interactions are
set equal to $0$.

The eigenvalues and eigenvectors of these matrices are well known
for the linear chain and cycles \cite{Heilbronner1976}. For the finite
linear chain (the banded matrix at left in Eq. \eqref{eq:Cycle6Atoms}),
they can be written respectively as cosines and sines, as follows.
Let $\lambda_{r}$ and $|\psi_{r}\rangle$ be the $r^{\mbox{th}}$
eigenvalue and the corresponding $r^{\mbox{th}}$ eigenvector, then
\begin{eqnarray}
\lambda_{r} & = & 2\cos r\omega\qquad\mbox{in units of }\beta\label{eq:Eval}\\
|\psi_{r}\rangle & = & \sqrt{\frac{2}{\left(N+1\right)}}\left[\sin\left(r\omega\right),\sin\left(2r\omega\right),\dots,\sin\left(Nr\omega\right)\right]^{T},\label{eq:Evec}
\end{eqnarray}
where $\omega\equiv\frac{\pi}{N+1}$ and $1\le r\le N$ is an integer.
The H{\"u}ckel matrix for a linear chain, whose
matrix representation is a symmetric tridiagonal matrix \cite[for a review]{meurant1992review},
and a $N-$membered ring, which is a circulant
matrix \cite{davis1979circulant,gray2005toeplitz}, respectively are
\begin{equation}
H_{1}=\left[\begin{array}{cccccc}
0 & 1\\
1 & 0 & 1\\
 & 1 & 0 & 1\\
 &  & 1 & 0 & \ddots\\
 &  &  & \ddots & \ddots & 1\\
 &  &  &  & 1 & 0
\end{array}\right]\qquad H_{1}^{c}=\left[\begin{array}{cccccc}
0 & 1 &  &  &  & 1\\
1 & 0 & 1\\
 & 1 & 0 & 1\\
 &  & 1 & 0 & \ddots\\
 &  &  & \ddots & \ddots & 1\\
1 &  &  &  & 1 & 0
\end{array}\right],\label{eq:Cycle6Atoms}
\end{equation}
where to emphasize the one-dimensional structure of the molecular system we put a subscript $1$ on $H$ and denote the Hamiltonian by $H_1$. From now omitted entries are zeros; we will explicitly write
down the zeros when it helps the presentation.

The H{\"u}ckel model has found renewed significance in recent experimental
and theoretical studies of molecular conductance, that is transmission
of a current through a molecule \cite[and references therein]{Solomon2013}. 

The Green's function matrix, $\mathbf{G}$, is defined via the resolvent
as
\begin{equation}
G\left(r,s\mbox{ };\mbox{ }E\right)=\langle r|\frac{1}{E-H}|s\rangle,\label{eq:GreensFunction_def}
\end{equation}
where $G\left(r,s\mbox{ };\mbox{ }E\right)$ is the $r,s$ entry of
the Green's function matrix, $H$ is the Hamiltonian and $E$ is an
energy. 

The Green's function plays an important role in the calculation of
transport phenomena such as conductivity \cite{Datta2005}. In the
simplest form of the theory, the conductance between electrodes connected
to sites $r$ and $s$ of a molecule is proportional to the square
of the absolute value of the matrix element of the unperturbed Green\textquoteright s
function,
\begin{equation}
G^{\left(0\right)}\left(r,s\mbox{ };\mbox{ }E\right)=\sum_{k}\frac{C_{rk}C_{sk}^{*}}{E-\epsilon_{k}+i\eta},\label{eq:Grs}
\end{equation}
where $C_{rk}$ is the coefficient of the $r^{\mbox{th}}$ atomic
orbital in the $k^{\mbox{th}}$ molecular orbital (MO) in an orthogonal
basis, $\epsilon_{k}$ is the $k^{\mbox{th}}$ MO energy, and $\eta$
is an infinitesimal positive number to assure analyticity. The Fermi
energy is equal to the Coulomb integral of the H{\"u}ckel model and
for convenience we set this energy to zero (see subsection "Level set of the Fermi Energy" below).

For the DC conductivity we want to evaluate the Green's function at
the Fermi energy; therefore $E=0$ in Eq. \eqref{eq:GreensFunction_def}.
By Sokhotski\textendash Plemelj theorem, $\lim_{\eta\rightarrow 0^+}\frac{1}{x\pm\mathrm{i}\eta} = {P}(\frac{1}{x})\mp i\pi\delta(x)$,
the real part of the Green's function (Eq. \eqref{eq:Grs}) is 
\begin{equation}
G^{\left(0\right)}\left(r,s\mbox{ };\mbox{ }E=0\right)=-\sum_{k}\frac{C_{rk}C_{sk}^{*}}{\epsilon_{k}}\quad.\label{eq:GrsApprox}
\end{equation}
Therefore, the Green's function in the basis described above (Eqs.
\ref{eq:Eval} and \ref{eq:Evec}), for a finite open linear chain
has entries that are:

\begin{eqnarray}
G^{\left(0\right)}\left(r,s\mbox{ };\mbox{ }E=0\right) & = & -\frac{1}{N+1}\sum_{k=1}^{N}\frac{\sin\left(rk\omega\right)\sin\left(sk\omega\right)}{\cos\left(k\omega\right)}\qquad\mbox{in units of }\beta^{-1}\label{eq:Problem}\\
\omega & \equiv & \frac{\pi}{N+1}\quad.\nonumber 
\end{eqnarray}
Below, for simplicity, we denote $G\left(r,s\right)\equiv G^{\left(0\right)}\left(r,s\mbox{ };\mbox{ }E=0\right)$. \begin{rem}
$\mathbf{G}^{(0)}$ is simply $-H_{1}^{-1}$ in the basis given by
Eqs. \eqref{eq:Eval} and \eqref{eq:Evec}. Generally, with energy set-point
$E=0$, the Green's function is minus the inverse of the Hamiltonian.
We compute the inverses of $H_{1}$ and $H_{1}^{c}$ in several ways--
from general formulas for tridiagonal matrices, directly from simple
equations for the first column of the inverses, and from factoring
the matrix symbol $e^{i\theta}+e^{-i\theta}$. 
\end{rem}
Our goal is to prove the conditions under which the inverse of various forms of the H{\"u}ckel model exists for different $N$ and in $d-$spatial dimensions. When it exists, we analytically derive closed-form formulas for the Green's function $\mathbf{G}^{(0)}$.

\subsection{Level set of the Fermi Energy}
The position of the actual Fermi levels in a calculation of molecular transmission may vary. It has proven to be a good approximation to set it equal to the Coulomb integral of the H{\"u}ckel method for most rings and chains (the limitations of this assumption will be mentioned later).

There is a good reason why we assume $E_F=\alpha$. The energy level of the $2p$ atomic orbital (AO) of carbon ($-11.4 eV$, the same energy level as the Coulomb integral), is almost the same as that of the $6s$ AO of the widely used Au electrode $(-10.9 eV)$ \cite{alvarez1989tables}. Note that the electronic configuration of Au is $5d^{10}6s^1$. This approximation works well as long as significant charge transfer between the molecule and the electrode surface does not occur \cite{xue2002first}.

The assumption that the Fermi level is equal to the Coulomb integral of the H{\"u}ckel  method is probably valid for even-membered chains and rings with $4n+2$ atoms.
\section{Determinants and Analytical Expressions for $H_{1}^{-1}$ and $\left(H_{1}^{c}\right)^{-1}$}
\subsection{Open Chain, $H_{1}$}
\begin{lem}
$H_{1}$ is only invertible when $N$ is even, in which case $\mbox{det}\left(H_{1}\right)=\left(-1\right)^{N/2}$\end{lem}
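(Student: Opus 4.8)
The plan is to exploit the tridiagonal structure of $H_1$ directly via a determinant recurrence, which delivers both the invertibility criterion and the closed-form value in one stroke. Let $D_N \equiv \det(H_1)$ denote the determinant of the $N\times N$ matrix. Expanding along the last row using the standard continuant identity for a tridiagonal matrix with diagonal entries $a_i$ and off-diagonal entries $b_i$, namely $D_N = a_N D_{N-1} - b_{N-1}^2 D_{N-2}$, and substituting $a_i = 0$ and $b_i = 1$, I obtain the two-term recurrence
\[
D_N = -\,D_{N-2},\qquad D_0 = 1,\quad D_1 = 0,
\]
where the base cases are the empty determinant $D_0 = 1$ and the $1\times 1$ determinant $D_1 = \det[0] = 0$.

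Solving the recurrence is then immediate. For odd $N$, repeatedly applying $D_N = -D_{N-2}$ terminates at $D_1 = 0$, so $\det(H_1) = 0$ and $H_1$ is singular. For even $N = 2m$, the recursion terminates at $D_0 = 1$ after accumulating one factor of $-1$ per step, giving $D_{2m} = (-1)^m = (-1)^{N/2}$, which is nonzero; hence $H_1$ is invertible precisely when $N$ is even, with the stated determinant. As an independent check I would recompute $\det(H_1)$ from the known spectrum in Eq.~\ref{eq:Eval}: $\det(H_1) = \prod_{r=1}^N 2\cos(r\omega)$ with $\omega = \pi/(N+1)$. This product vanishes iff some factor $\cos(r\omega)$ is zero, i.e.\ $r\omega = \pi/2$, which forces $2r = N+1$ and is solvable for an integer $r\in\{1,\dots,N\}$ exactly when $N$ is odd; this reproduces the invertibility criterion, and the even-$N$ value follows from the Chebyshev evaluation $\det(\lambda I - H_1) = U_N(\lambda/2)$ at $\lambda=0$, since $U_N(0) = \sin\!\big((N+1)\pi/2\big) = (-1)^{N/2}$ for even $N$.

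There is no serious obstacle in this argument; it is essentially a bookkeeping exercise once the recurrence is in hand. The only points requiring genuine care are fixing the base case $D_1 = 0$ correctly---an off-by-one in the parity would flip the entire conclusion---and tracking the accumulated sign in the recursion. Everything else is routine, and invertibility follows at once from $\det(H_1)\neq 0$.
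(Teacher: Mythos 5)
Your proof is correct and follows essentially the same route as the paper: both reduce $\det(H_1)$ via the two-term continuant recurrence $D_N=-D_{N-2}$ and read off the parity dichotomy from the base cases (the paper stops at $D_2=-1$ and $D_1=0$, you at $D_0=1$ and $D_1=0$, which is the same computation). The spectral cross-check via $\prod_r 2\cos(r\omega)$ and $U_N(0)$ is a nice independent confirmation but not needed.
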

\begin{proof}
Generally for any $N$ 
\begin{eqnarray*}
\det\left(H_{N}\right) & = & -\det\left(H_{N-2}\right)=\cdots=\left(-1\right)^{\frac{N-2}{2}}\det\left(H_{2}\right)=\left(-1\right)^{\frac{N-2}{2}}\left|\begin{array}{cc}
0 & 1\\
1 & 0
\end{array}\right|=\left(-1\right)^{N/2}\quad N\mbox{ even,}\\
\det\left(H_{N}\right) & = & -\det\left(H_{N-2}\right)=\cdots=\det\left(H_{1}\right)=\left|0\right|=0\quad\quad N\mbox{ odd},
\end{eqnarray*}
where we denoted $H_{1}$ of size $N\times N$, simply by $H_{N}$
and the determinant of a matrix by $\left|\cdot\right|$. In Eq. \eqref{eq:Problem},
$\cos\left(k\omega\right)$ can take on a zero value if $N$ is odd,
whereby $G\left(r,s\right)$ is not defined.\end{proof}
\begin{prop}
When $N$ is even, the entries of the Green's function $G\left(r,s\right)\equiv-H_{1}^{-1}\left(r,s\right)$
are
\begin{equation}
G\left(r,s\right)=\left\{ \begin{array}{ccc}
\left(-1\right)^{\frac{r+s-1}{2}} & \quad & r<s\mbox{ }:\mbox{ }r\mbox{ odd and }s\mbox{ even}\\
\left(-1\right)^{\frac{r+s-1}{2}} & \quad & r>s\mbox{ }:\mbox{ }r\mbox{ even and }s\mbox{ odd}\\
0 & \quad & \mbox{otherwise}.
\end{array}\right.\label{eq:G_rs_Analytical_Final}
\end{equation}
\begin{equation}
G=-H_{1}^{-1}=\left[\begin{array}{ccccccc}
0 & -1 & 0 & +1 & 0 & -1 & \cdots\\
-1 & 0 & 0 & 0 & 0 & 0\\
0 & 0 & 0 & -1 & 0 & +1\\
+1 & 0 & -1 & 0 & 0 & 0 & \cdots\\
0 & 0 & 0 & 0 & 0 & -1\\
-1 & 0 & +1 & 0 & -1 & 0 & \cdots\\
\vdots &  &  & \vdots &  &  & \ddots
\end{array}\right]\label{eq:The-Green's-function}
\end{equation}
\end{prop}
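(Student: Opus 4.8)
The plan is to verify the closed form by direct substitution rather than to solve for the inverse from scratch. Since the set-point $E=0$ gives $G=-H_1^{-1}$, the claim is equivalent to $H_1 G = -I$. Because $H_1$ is tridiagonal with ones on the off-diagonals, the product $(H_1 G)(i,j)$ collapses to $G(i-1,j)+G(i+1,j)$, so the whole proposition reduces to the single banded identity
\[
G(i-1,j)+G(i+1,j) = -\delta_{ij}, \qquad 1\le i,j\le N,
\]
under the boundary convention $G(0,j)=G(N+1,j)=0$. The formula in Eq.~\ref{eq:G_rs_Analytical_Final} is manifestly symmetric under $r\leftrightarrow s$ (the two nonzero cases exchange), consistent with $H_1=H_1^{T}$, so establishing this one relation suffices.

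First I would recast the formula in a form adapted to a fixed column $j$. The entry $G(i,j)$ is nonzero exactly when $\min(i,j)$ is odd and $\max(i,j)$ is even; hence a column with $j$ even is supported on the odd indices $i<j$, a column with $j$ odd is supported on the even indices $i>j$, and the diagonal always vanishes. On its support the value is $(-1)^{(i+j-1)/2}$, so two nonzero entries at consecutive supported positions (which differ by $2$ in $i$) differ by a factor $-1$. The verification then splits by the position of $i$ relative to $j$ and by the parity of $i$. Away from the diagonal the two neighbors $G(i-1,j)$ and $G(i+1,j)$ are either both off the support (hence both zero) or are two consecutive supported entries of the form $(-1)^{m-1}$ and $(-1)^{m}$, which cancel; either way the sum is $0=-\delta_{ij}$. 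At $i=j$ exactly one neighbor lands on the support, and a one-line computation gives its value to be $-1$, matching $-\delta_{jj}=-1$. This disposes of every interior case through the alternating-sign cancellation.

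The main obstacle, and the only genuinely delicate point, is the bottom boundary, and it is precisely here that the hypothesis $N$ even enters. The diagonal identity at an \emph{odd} column $j$ relies on the single nonzero neighbor $G(j+1,j)$, sitting at the even index $j+1$, actually lying inside the matrix, i.e.\ $j+1\le N$. Since $N$ is even the largest odd column is $j=N-1$, so $j+1=N\le N$ and the pattern closes consistently. Were $N$ odd, the final column $j=N$ would be forced to vanish identically, because its required partner $G(N+1,N)$ falls outside the array; then $H_1 G=-I$ cannot hold. This recovers the singularity already established in the Lemma and shows the two statements are consistent, so I would flag the boundary bookkeeping as the crux rather than any of the routine parity cases.

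As an independent check, and as the route by which the pattern is most naturally discovered, one may instead solve $H_1\,g=-e_{s}$ for the $s$-th column directly. The tridiagonal structure yields the forward recursion $g_{i+1}=-g_{i-1}-\delta_{i,s}$ with $g_0=0$ and a single free constant $g_1$; propagating it produces the alternating $0,\pm1$ pattern at once, with the even- and odd-indexed subsequences decoupling. Imposing the remaining boundary condition $g_{N+1}=0$ then determines $g_1$ uniquely when $N$ is even (and is obstructed otherwise), reproducing both Eq.~\ref{eq:G_rs_Analytical_Final} and the explicit matrix in Eq.~\ref{eq:The-Green's-function}.
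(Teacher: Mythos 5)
Your proposal is correct, but it proves the proposition by a genuinely different route than the paper. The paper derives the inverse from Usmani's general formula for tridiagonal matrices: it specializes the recursions $\theta_r=-\theta_{r-2}$, $\phi_s=-\phi_{s+2}$ to the case $b_i=0$, $a_i=c_i=1$, solves them in closed form, and reads off $G(r,s)$ from the resulting quotient, so the sign pattern and the zeros emerge as output of a citation-backed machine. You instead verify the stated closed form directly, reducing $H_1G=-I$ to the three-term identity $G(i-1,j)+G(i+1,j)=-\delta_{ij}$ with $G(0,j)=G(N+1,j)=0$, and checking it case by case on the support $\{\min(i,j)\ \mathrm{odd},\ \max(i,j)\ \mathrm{even}\}$. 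Your argument is more elementary and self-contained (no external formula needed, and a right inverse of a square matrix is automatically the inverse, so no separate invertibility step is required), and it isolates cleanly where the hypothesis that $N$ is even enters --- at the bottom boundary $j=N-1$, $j+1=N$ --- which the paper's proof leaves implicit in the vanishing of $\theta_N$ for odd $N$. What it does not do by itself is explain how the formula is found; your closing recursion $g_{i+1}=-g_{i-1}-\delta_{i,s}$ supplies that, and it is essentially the "direct approach" the paper itself sketches in the Remark following Proposition 3 for the non-circulant case. The one place to be careful is the claim that away from the diagonal the two neighbors are "either both off the support or both on it": this holds because the support boundary in column $j$ sits exactly at index $j$, so only $i=j$ straddles it --- worth stating explicitly, but not a gap.
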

\begin{proof}
Suppose we have a general tridiagonal matrix
\[
A=\left[\begin{array}{ccccc}
b_{1} & c_{1}\\
a_{1} & b_{2} & c_{2}\\
 & a_{2} & b_{3} & \ddots\\
 &  & \ddots & \ddots & c_{N-1}\\
 &  &  & a_{N-1} & b_{N}
\end{array}\right]
\]
Then Usmani's formula \cite{usmani1994inversion} for the $r,s$ entry
of $A^{-1}$ is 
\begin{equation}
\alpha\left(r,s\right)=\left\{ \begin{array}{ccc}
\left(-1\right)^{r+s}c_{r}c_{r+1}\cdots c_{s-1}\mbox{ }\mbox{ }\theta_{r-1}\phi_{s+1}/\theta_{N} & \quad & r<s\\
\theta_{r-1}\phi_{r+1}/\theta_{N} & \quad & r=s\\
\left(-1\right)^{r+s}a_{s+1}a_{s+2}\cdots a_{r}\mbox{ }\mbox{ }\theta_{s-1}\phi_{r+1}/\theta_{N} & \quad & r>s
\end{array}\right.\label{eq:General_TriInverse}
\end{equation}
where $\theta_{r}$ and $\phi_{s}$ satisfy second order recursion
relations
\[
\begin{array}{ccc}
\theta_{r}=b_{r}\theta_{r-1}-a_{r}c_{r-1}\theta_{r-2} & \quad & r=1,2,\dots,N-1,N\\
\phi_{s}=b_{s}\phi_{s+1}-c_{s}a_{s+1}\phi_{s+2} & \quad & s=N,N-1,\cdots2,1
\end{array}
\]
with the initial conditions $\theta_{-1}=0$, $\theta_{0}=1$, $\phi_{N+1}=1$
and $\phi_{N+2}=0$. 

We are interested in the special case where $b_{i}=0$, $c_{i}=a_{i}=1$
for all $i$. The recursion relations are now given by
\begin{eqnarray*}
\theta_{r} & = & -\theta_{r-2}\qquad r=2,3,\dots,N\\
\phi_{s} & = & -\phi_{s+2}\qquad s=N,N-1,\dots,2,1
\end{eqnarray*}
The solutions, after imposing the initial conditions, are 
\begin{eqnarray*}
\theta_{r} & = & \frac{i^{r}}{2}\left[1+\left(-1\right)^{r}\right];\qquad\phi_{s}=\frac{i^{-\left(N+1\right)+s}}{2}\left[1-\left(-1\right)^{s}\right].
\end{eqnarray*}
Substituting these in Eq. \eqref{eq:General_TriInverse} and multiplying
by $-1$, we obtain $\mathbf{G}=-H_{1}^{-1}$: 
\begin{equation}
G\left(r,s\right)=\left\{ \begin{array}{ccc}
\frac{i^{3\left(r+s-1\right)}}{4}\left[1+\left(-1\right)^{r-1}\right]\left[1-\left(-1\right)^{s+1}\right] & \quad & r<s\\
\mbox{ }\frac{i^{i^{2\left(r-1\right)}}}{4}\left[1+\left(-1\right)^{r-1}\right]\left[1-\left(-1\right)^{r+1}\right] & \quad & r=s\\
\frac{i^{3\left(r+s-1\right)}}{4}\left[1+\left(-1\right)^{s-1}\right]\left[1-\left(-1\right)^{r+1}\right] & \quad & r>s
\end{array}\right.\label{eq:G_rs_Analytical}
\end{equation}
\end{proof}
By symmetry we may focus on $r\ge s$. In Eq. \eqref{eq:G_rs_Analytical}
the only nonzero elements, for $r\ge s$, correspond to $r$ even
and $s$ odd, in which case $G\left(r,s\right)=i^{3\left(r+s-1\right)}=\left(-1\right)^{\frac{r+s-1}{2}}$. 
\begin{cor}
The closed form expression for the following sum gives the identity
\begin{equation}
-\frac{1}{N+1}\sum_{k=1}^{N}\frac{\sin\left(rk\omega\right)\sin\left(sk\omega\right)}{\cos\left(k\omega\right)}=\left(-1\right)^{\frac{r+s-1}{2}}\label{eq:Math}
\end{equation}
when $r$ is even and $s<r$ is odd. Otherwise, $G\left(r,s\right)=0$
when $s\le r$ and $G\left(r,s\right)=G\left(s,r\right)$ when $r\le s$. 
\end{cor}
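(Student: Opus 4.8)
The plan is to recognize that Eq. \ref{eq:Math} is nothing more than the equality of two expressions for the same matrix entry $G(r,s)=-H_1^{-1}(r,s)$: its left-hand side is the spectral (eigenvalue) representation of $-H_1^{-1}$ afforded by Eqs. \ref{eq:Eval} and \ref{eq:Evec}, while its right-hand side is the closed form already established in Eq. \ref{eq:G_rs_Analytical_Final}. Thus the corollary follows once I confirm that the sum in Eq. \ref{eq:Problem} really is $-H_1^{-1}(r,s)$, after which the Proposition supplies the value.

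First I would verify that the vectors $|\psi_r\rangle$ of Eq. \ref{eq:Evec} are genuine orthonormal eigenvectors of $H_1$. The eigenvalue relation $H_1|\psi_r\rangle=2\cos(r\omega)|\psi_r\rangle$ follows row-by-row from the identity $\sin((j-1)r\omega)+\sin((j+1)r\omega)=2\cos(r\omega)\sin(jr\omega)$ at the interior rows, with the two boundary rows closing because $\sin(0)=0$ and $\sin((N+1)r\omega)=\sin(r\pi)=0$. Orthonormality is the discrete sine-transform relation $\frac{2}{N+1}\sum_{k=1}^N \sin(rk\omega)\sin(sk\omega)=\delta_{rs}$, a standard Dirichlet-type sum. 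With these in hand, writing $U$ for the orthogonal matrix whose columns are the $|\psi_r\rangle$ gives $H_1=U\Lambda U^{T}$ with $\Lambda=\mathrm{diag}(2\cos(k\omega))$.

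Next, when $N$ is even the Lemma guarantees $\det H_1\neq 0$, equivalently every eigenvalue $2\cos(k\omega)$ is nonzero, so $H_1^{-1}=U\Lambda^{-1}U^{T}$ term-by-term. Taking the $(r,s)$ entry and multiplying by $-1$ produces
\begin{equation*}
-H_1^{-1}(r,s)=-\frac{1}{N+1}\sum_{k=1}^{N}\frac{\sin(rk\omega)\sin(sk\omega)}{\cos(k\omega)},
\end{equation*}
where the factor $\tfrac12$ from $\Lambda^{-1}$ combines with the normalization $\tfrac{2}{N+1}$. This is exactly the left-hand side of Eq. \ref{eq:Math}, and by Eq. \ref{eq:G_rs_Analytical_Final} it equals $(-1)^{(r+s-1)/2}$ precisely when $r$ is even and $s<r$ is odd, and vanishes otherwise; the symmetry $G(r,s)=G(s,r)$, inherited from the symmetry of $H_1^{-1}$, then covers the range $r\le s$.

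I expect no serious obstacle here, since the hard analytic work is already absorbed into the Proposition's evaluation of $H_1^{-1}$ via Usmani's formula, and the corollary is the bookkeeping step that matches that inverse against its spectral sum. The only point requiring genuine care is the orthonormality relation, which is what licenses the term-by-term spectral inversion. Were one instead to insist on evaluating the trigonometric sum in Eq. \ref{eq:Math} directly, without appealing to Usmani's formula, the real difficulty would shift to a partial-fraction or contour-integral treatment of the singular $1/\cos(k\omega)$ weight, which is precisely what the matrix-inverse route circumvents.
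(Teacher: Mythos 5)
Your proposal is correct and follows essentially the same route as the paper: the corollary is obtained by identifying the spectral sum in Eq.~\ref{eq:Problem} with the entries of $-H_1^{-1}$ and then reading off the values from the closed form in Eq.~\ref{eq:G_rs_Analytical_Final} established via Usmani's formula. Your added verification of the eigenpairs and the discrete sine orthogonality just makes explicit what the paper cites as well known; the paper's appendix additionally offers a purely trigonometric evaluation of the sum, but that is presented only as an alternative proof.
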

This is the Green's function in an orthonormal basis. A purely trigonometric
derivation, that does not use the H{\"u}ckel matrix and serves as
an alternative proof of Eqs. \eqref{eq:G_rs_Analytical_Final} and \eqref{eq:Math},
is presented in the appendix.
\begin{rem}
Formulas for the inverse of a tridiagonal Toeplitz matrix have been
given by Schlegel \cite{schlegel1970explicit} and Mallik \cite{mallik2001inverse}
in terms of Chebyshev polynomials. 
\end{rem}
In quantum chemistry, it was known that $G\left(r,s\right)=0$ when
$r$ and $s$ have the same parity. These zeros can be derived from
a property called \textquotedblleft alternancy\textquotedblright{}
(the original proof is due to C. A. Coulson and G. S. Rushbrooke \cite{coulson1940note}).
If the interacting orbitals of a molecule can be divided into two
disjoint sets, where the atoms of one set are adjacent only to atoms
of the other set, the molecule is said to be alternant. For alternants,
for instance the linear chain studied here, a number of results can
be proved; for instance the energy levels are paired positive and
negative, and in paired levels the coefficients of one set of atoms
are just minus the coefficients of that set in the paired level. It
follows that $G\left(r,s\right)=0$ when $r$ and $s$ have the same
parity. The other zeros and $\pm1$ entries, as far as we know, were
not noticed.

In chemical applications one often has to deal with the special case
of alternating bond strengths along a chain. The proposition below
gives the form of the Hamiltonian and its corresponding Green's function.
\begin{defn}
The bond alternating Hamiltonian is defined by $H_{alt}$ for $N$
even, where 
\[
H_{alt}=\left[\begin{array}{ccccccc}
0 & \beta\\
\beta & 0 & \alpha\\
 & \alpha & 0 & \beta\\
 &  & \beta & 0 & \ddots\\
 &  &  & \ddots & \ddots & \alpha\\
 &  &  &  & \alpha & 0 & \beta\\
 &  &  &  &  & \beta & 0
\end{array}\right]
\]
\end{defn}
Comment: In this special limit, the Toeplitz structure is lost. $\alpha$
in this definition is not related to the one discussed above, which
stood for the diagonal elements and was taken to be zero.
\begin{prop}
\label{prop:BondAltern}The entries of the Green's function $\mathbf{G}\equiv-\left(H_{alt}\right)^{-1}$,
are given by 
\begin{eqnarray}
G\left(r,s\right) & = & \left\{ \begin{array}{ccc}
\frac{\left(-1\right)^{\frac{r+s-1}{2}}}{4}\frac{1}{\beta}\left(\frac{\alpha}{\beta}\right)^{\frac{r-s-1}{2}}\left\{ 1-\left(-1\right)^{s}\right\} \left\{ 1+\left(-1\right)^{r}\right\} , & \qquad & r\ge s\\
\\
\frac{\left(-1\right)^{\frac{r+s-1}{2}}}{4}\frac{1}{\beta}\left(\frac{\alpha}{\beta}\right)^{\frac{s-r-1}{2}}\left\{ 1-\left(-1\right)^{r}\right\} \left\{ 1+\left(-1\right)^{s}\right\} , &  & r\le s\mbox{ }.
\end{array}\right.\label{eq:G_bondAlternation0}
\end{eqnarray}
\end{prop}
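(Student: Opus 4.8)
The plan is to reuse the Usmani tridiagonal-inverse machinery from the proof of the previous proposition, now with position-dependent entries. First I would record the structure of $H_{alt}$: it is symmetric ($a_i=c_i$) and tridiagonal with zero diagonal, where the off-diagonal entry $a_i=c_i$ equals $\beta$ when $i$ is odd and $\alpha$ when $i$ is even. Symmetry gives $G(r,s)=G(s,r)$, so it suffices to establish the $r\ge s$ formula; the $r\le s$ case is then its reflection under interchanging $r$ and $s$, which also swaps the two parity conditions in the braces.

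The key computation is to solve the two second-order recursions for the minors $\theta_r$ and $\phi_s$ with the correct coupling products (the constant case $H_1$ hid the index dependence, since there every coupling equals $1$; here one must use the product of the two entries linking consecutive rows). Because that coupling product equals $\beta^2$ precisely at the indices where the minors are nonvanishing, I expect $\theta_r=0$ for odd $r$ and $\theta_r=(-1)^{r/2}\beta^{r}$ for even $r$, and dually $\phi_s=0$ for even $s$ and $\phi_s=(-1)^{(N+1-s)/2}\beta^{\,N+1-s}$ for odd $s$. In particular $\det H_{alt}=\theta_N=(-1)^{N/2}\beta^{N}\neq 0$, confirming invertibility for even $N$. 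Substituting into Usmani's formula for $r>s$, the factors $\theta_{s-1}$ and $\phi_{r+1}$ vanish unless $s$ is odd and $r$ is even, which reproduces the indicator $\{1-(-1)^s\}\{1+(-1)^r\}$; the remaining ingredient is the off-diagonal product $\prod_{k=s}^{r-1}a_k$, which I would evaluate by counting the parities of the indices $s,\dots,r-1$ to obtain $\beta^{(r-s+1)/2}\alpha^{(r-s-1)/2}$.

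The last step is bookkeeping. Collecting the powers of $\beta$ contributed by the two minors, the determinant, and this product collapses to $\beta^{-1}(\alpha/\beta)^{(r-s-1)/2}$, while collecting all the signs (the $(-1)^{r+s}$ from Usmani, the signs of the minors, and the overall $-1$ from $G=-(H_{alt})^{-1}$), using that $r$ is even in the nonzero case, collapses to $(-1)^{(r+s-1)/2}$; absorbing the selection rule into $\frac{1}{4}\{1-(-1)^s\}\{1+(-1)^r\}$ (which equals $1$ exactly when $s$ is odd and $r$ is even, and $0$ otherwise) then yields the stated formula. The main obstacle is precisely this parity bookkeeping: correctly counting how many $\alpha$'s versus $\beta$'s enter $\prod_{k=s}^{r-1}a_k$, since that product is the sole source of the $(\alpha/\beta)$ power, and verifying that the leftover $\beta$-powers cancel down to a single $\beta^{-1}$ — together with pinning down the index conventions in the Usmani recursions, which (unlike for $H_1$) are no longer invisible. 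A useful independent check is to invert a small case such as $N=4$ by hand, which gives $G(4,1)=\alpha/\beta^{2}$, before attempting the general count. As an alternative to Usmani one could instead posit the form of $G$ and verify $H_{alt}\,(-G)=I$ directly through the three-term relation $a_{i-1}G(i-1,s)+c_iG(i+1,s)=-\delta_{is}$, but this merely trades the parity count for an equally delicate case analysis.
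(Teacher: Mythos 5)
Your proposal is correct and takes the same route the paper intends: the paper's entire proof of this proposition is the single sentence that the form ``can be derived using the same techniques as above,'' i.e.\ Usmani's formula applied with the alternating couplings, which is exactly what you carry out. Your intermediate quantities all check out --- $\theta_r=(-1)^{r/2}\beta^{r}$ for $r$ even (else $0$), $\phi_s=(-1)^{(N+1-s)/2}\beta^{N+1-s}$ for $s$ odd (else $0$), the subdiagonal product $\beta^{(r-s+1)/2}\alpha^{(r-s-1)/2}$, and the resulting sign $(-1)^{(r+s-1)/2}$ and power $\beta^{-1}(\alpha/\beta)^{(r-s-1)/2}$ --- as does the $N=4$ sanity check $G(4,1)=\alpha/\beta^{2}$, and you are right that the only real care needed is the index convention in the off-diagonal product, which the constant case conceals.
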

\begin{proof}
The form can be derived using the same techniques as above.
\end{proof}
\subsection{Cyclic chain, $H_{1}^{c}$}
The cyclic Hamiltonian is a circulant matrix and therefore diagonalizable
in Fourier basis \cite{davis1979circulant,gray2005toeplitz,strang1999discrete}.
Let $\omega_{j}=\exp\left(2\pi ij/n\right)$, the eigenpairs are
\begin{eqnarray*}
\lambda_{j} & = & 2\cos\left(2\pi j/n\right),\qquad j=0,1,\dots,n-1\\
v_{j}^{T} & = & \frac{1}{\sqrt{n}}\left(1,\omega_{j},\omega_{j}^{2},\dots,\omega_{j}^{\left(n-1\right)}\right).
\end{eqnarray*}
In particular when $n=4j$, the matrix has zero eigenvalues and hence
non-invertible. The following lemma sharpens this notion.
\begin{lem}
\label{lem:Cycle-determinat}The determinant of $H_{1}^{c}$ is given
by
\[
\det\left(H_{1}^{c}\right)=\left\{ \begin{array}{ccc}
-1 & \quad & N=2\\
2 & \quad & N=2k+1\\
0 & \quad & N=4k\\
-4 & \quad & N=4k+2
\end{array}\right.,\qquad for\quad k\in\mathbb{N}.
\]
\end{lem}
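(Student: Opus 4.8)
The plan is to compute $\det(H_1^c)$ as the product of its eigenvalues, which by the circulant diagonalization recalled just above are $\lambda_j = 2\cos(2\pi j/N)$ for $j=0,1,\dots,N-1$. Writing $\omega \equiv e^{2\pi i/N}$, this gives
\[
\det(H_1^c) = \prod_{j=0}^{N-1} 2\cos\!\left(\frac{2\pi j}{N}\right) = \prod_{j=0}^{N-1}\left(\omega^{j}+\omega^{-j}\right).
\]
Two cases fall out immediately. When $N=4k$, the choice $j=N/4$ yields $\cos(\pi/2)=0$, so a single eigenvalue vanishes and $\det(H_1^c)=0$ (this is exactly the ``$n=4j$'' observation preceding the lemma). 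The value $N=2$ must be flagged and treated by hand, because there the circulant symbol degenerates: the superdiagonal and the corner entry occupy the same position, so $H_1^c$ is the $2\times 2$ exchange matrix and $\det=-1$, which is not what the generic eigenvalue product would predict. The remaining work is the two genuine cases, $N$ odd and $N\equiv 2 \pmod 4$ with $N\ge 6$.

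For those I would factor out the phase using $\omega^{j}+\omega^{-j}=\omega^{-j}(1+\omega^{2j})$, so that
\[
\det(H_1^c) = \left(\prod_{j=0}^{N-1}\omega^{-j}\right)\prod_{j=0}^{N-1}\left(1+\omega^{2j}\right) = (-1)^{N-1}\,P, \qquad P \equiv \prod_{j=0}^{N-1}\left(1+\omega^{2j}\right),
\]
where the phase is evaluated from $\sum_{j=0}^{N-1} j = N(N-1)/2$, giving $\omega^{-N(N-1)/2}=e^{-i\pi(N-1)}=(-1)^{N-1}$. To evaluate $P$ I would use the elementary identity $\prod_{k=0}^{m-1}(x-\zeta^{k})=x^{m}-1$ for $\zeta$ a primitive $m$-th root of unity, specialized at $x=-1$, which yields $\prod_{k=0}^{m-1}(1+\zeta^{k}) = 1-(-1)^{m}$, equal to $2$ when $m$ is odd.

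Carrying this out: when $N$ is odd, multiplication by $2$ is a bijection of $\mathbb{Z}/N\mathbb{Z}$, so the exponents $2j \bmod N$ run once through all residues and $P=\prod_{k=0}^{N-1}(1+\omega^{k})=2$; since $N-1$ is even, $\det(H_1^c)=(-1)^{N-1}\cdot 2 = 2$. When $N=2m$ with $m$ odd (the case $N\equiv 2\pmod 4$), $\omega^{2}$ is a primitive $m$-th root of unity and the map $j\mapsto 2j \bmod N$ covers each $m$-th root of unity exactly twice as $j$ ranges over $0,\dots,2m-1$, so $P=\big(1-(-1)^{m}\big)^{2}=4$; since $N-1$ is odd, $\det(H_1^c)=(-1)^{N-1}\cdot 4 = -4$. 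The points demanding care---and the main place the argument could go wrong---are the phase sign $(-1)^{N-1}$ and the exact multiplicity count in $j\mapsto 2j$ (a bijection for $N$ odd, but two-to-one onto the $m$-th roots when $N=2m$), together with the explicit $N=2$ exception where the circulant description itself breaks down.
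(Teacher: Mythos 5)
Your proof is correct, but it takes a genuinely different route from the paper's. You compute $\det(H_1^c)$ as the product of the circulant eigenvalues $2\cos(2\pi j/N)$ and evaluate it with roots-of-unity identities: the phase factor $(-1)^{N-1}$, the identity $\prod_{k=0}^{m-1}(1+\zeta^k)=1-(-1)^m$, and a correct count of how $j\mapsto 2j \bmod N$ covers the residues (a bijection for $N$ odd, two-to-one onto the $m$-th roots for $N=2m$ with $m$ odd). All of these check out, as does your observation that $N=2$ must be handled separately because the matrix there is the exchange matrix rather than $S+S^{-1}$, so the generic eigenvalue product (which would give $-4$) does not apply. The paper instead argues case by case with block/Schur-complement determinant formulas, peeling off one row and column for $N$ odd and two for $N\equiv 2\pmod 4$ and reusing the explicit inverse of the open chain $H_{N-1}$ computed earlier; for $N=4k$ it exhibits an explicit two-dimensional kernel. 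The paper's route buys structural information (a kernel basis, and hence that exactly two eigenvalues vanish when $N=4k$ --- note your phrase ``a single eigenvalue vanishes'' is slightly off, since $j=N/4$ and $j=3N/4$ both give zero, though this is immaterial for the determinant), and it recycles machinery already built for the open chain. Your route is more uniform and self-contained given the circulant diagonalization stated just before the lemma, at the cost of the small amount of bookkeeping with roots of unity and the explicit $N=2$ exception.
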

\begin{proof}
When $N=2$, trivially $\left|\begin{array}{cc}
0 & 1\\
1 & 0
\end{array}\right|=-1$. When $N$ is odd, we express
\[
H_{1}^{c}=\left[\begin{array}{cc}
H_{N-1} & B\\
C & D
\end{array}\right]
\]
where $C=\left[\begin{array}{ccccc}
1 & 0 & \cdots & 0 & 1\end{array}\right]$, $B=C^{T}$, $D=0$ and $H_{N-1}$ is an $N-1\times N-1$ version
of $H_{1}$ (open chain) as defined above, which is invertible. With
this decomposition, the structure of $H_{1}^{-1}$ derived above,
and the well-known fact about the determinant of block matrices we
arrive at
\[
\det\left(H_{1}^{c}\right)=-\det\left(H_{N-1}\right)\det\left(CH_{N-1}^{-1}B\right)=\det\left(CH_{N-1}^{-1}B\right)=2\quad.
\]
When $N$ is a multiple of $4$, one can easily check that the vectors
$\mathbf{v}_{1}\equiv\left[0,-1,0,1\right]^{T}$ and $\mathbf{v}_{2}\equiv\left[1,0,-1,0\right]^{T}$
generate the kernel of $H_{1}^{c}$. Namely, if $H_{1}^{c}$ is a
$4k\times4k$ matrix, then the $k-$fold concatenations $[\mathbf{v}_{1}\mathbf{v}_{1}\cdots\mathbf{v}_{1}]^{T}$
and $\left[\mathbf{v}_{2}\mathbf{v}_{2}\cdots\mathbf{v}_{2}\right]^{T}$
are in the $\ker\left(H_{1}^{c}\right)$. Moreover, since excluding
the last two rows and columns of $H_{1}^{c}$ gives $H_{4k-2}$, which
is invertible, we conclude that the two vectors are a basis for the
kernel of $H_{1}^{c}$. 

Lastly, if $N$ is even yet not a multiple of $4$, we write $H_{1}^{c}=\left[\begin{array}{cc}
H_{N-2} & B\\
B^{T} & D
\end{array}\right]$, where $B^{T}=\left[\begin{array}{cccc}
0 & \cdots & 0 & 1\\
1 & 0 & \cdots & 0
\end{array}\right]$ and $D=\left[\begin{array}{cc}
0 & 1\\
1 & 0
\end{array}\right]$. Here $H_{N-2}$ has a size that is a multiple of $4$. Using the
techniques above we obtain
\begin{eqnarray*}
\det\left(H_{1}^{c}\right) & = & \det\left(H_{N-2}\right)\det\left(D-B^{T}H_{N-2}^{-1}B\right)=\det\left(D-B^{T}H_{N-2}^{-1}B\right)\\
 & = & \det\left(\left[\begin{array}{cc}
0 & 1\\
1 & 0
\end{array}\right]-\left[\begin{array}{cc}
0 & -1\\
-1 & 0
\end{array}\right]\right)=-4\quad.
\end{eqnarray*}
\end{proof}
\begin{defn}
A Toeplitz matrix is a matrix that is constant along diagonals. A
circulant matrix is Toeplitz, and each column is a cyclic shift of
the previous column \cite{TrefethenEmbree2005,gray2005toeplitz}.
Thus the lower triangular part of a circulant determines the upper
triangular part:
\end{defn}
\[
\mbox{Toeplitz }A=\left[\begin{array}{ccc}
x_{0} & x_{-1} & x_{-2}\\
x_{1} & x_{0} & x_{-1}\\
x_{2} & x_{1} & x_{0}
\end{array}\right]\quad\mbox{or}\quad\left[\begin{array}{cccc}
x_{0} & x_{-1} & \cdots & x_{-\left(N-1\right)}\\
x_{1} & x_{0} &  & \vdots\\
\vdots &  & \ddots & x_{-1}\\
x_{N-1} & \cdots & x_{1} & x_{0}
\end{array}\right]
\]

\[
\mbox{Circulant }A=\left[\begin{array}{ccc}
x_{0} & x_{2} & x_{1}\\
x_{1} & x_{0} & x_{2}\\
x_{2} & x_{1} & x_{0}
\end{array}\right]\quad\mbox{or}\quad\left[\begin{array}{cccc}
x_{0} & x_{N-1} & \cdots & x_{1}\\
x_{1} & x_{0} &  & x_{2}\\
\vdots &  & \ddots & \vdots\\
x_{N-1} & x_{N-2} & \cdots & x_{0}
\end{array}\right].
\]
The inverse of a circulant matrix is circulant. The inverse of a Toeplitz
matrix is not in general Toeplitz.

A Toeplitz matrix $T$ has $\left(r,s\right)$ entries that depend
on $r-s$. Therefore specifying the first row and the first column
fully specifies the matrix. Specifying the first column(or row) is
sufficient to specify a circulant matrix. 
\begin{prop}
\label{prop:Cycle-Green's-function}The $N\times N$ H{\"u}ckel circulant
matrix $H_{1}^{c}$ is invertible for $N\ne4k$. The first column
of the inverse is 
\[
\begin{array}{ccccc}
N=4k+1 & \quad & \left(x_{0},x_{1},x_{2},x_{3},\dots\right) & = & \frac{1}{2}\left(1,1,-1,-1,\mbox{ repeat}\right)\\
N=4k+2 & \quad & \left(x_{0},x_{1},x_{2},x_{3},\dots\right) & = & \frac{1}{2}\left(0,1,0,-1,\mbox{ repeat}\right)\\
N=4k+3 & \quad & \left(x_{0},x_{1},x_{2},x_{3},\dots\right) & = & \frac{1}{2}\left(-1,1,1,-1,\mbox{ repeat}\right)
\end{array}
\]
The matrix representation is shown in Fig. \ref{fig:Toeplitz-structure-of}.
\end{prop}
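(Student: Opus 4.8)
The plan is to exploit the circulant structure directly, bypassing the Fourier sum. Since $H_{1}^{c}$ is circulant and (for $N\neq 4k$) invertible by Lemma \ref{lem:Cycle-determinat}, its inverse is again circulant and is therefore completely determined by its first column $x=(x_{0},x_{1},\dots,x_{N-1})^{T}$. That column is the unique solution of $H_{1}^{c}x=e_{0}$, where $e_{0}=(1,0,\dots,0)^{T}$. Reading off the action of the cyclic nearest-neighbor operator, this system becomes the scalar recursion
\begin{equation}
x_{r-1}+x_{r+1}=\delta_{r,0},\qquad r=0,1,\dots,N-1,
\end{equation}
with all indices taken modulo $N$.

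For $r=1,\dots,N-2$ these are the homogeneous relations $x_{r+1}=-x_{r-1}$, which I would solve to express every entry in terms of $x_{0}$ and $x_{1}$:
\begin{equation}
x_{2m}=(-1)^{m}x_{0},\qquad x_{2m+1}=(-1)^{m}x_{1}.
\end{equation}
The two equations not yet used---the wrap-around relation at $r=N-1$, namely $x_{0}=-x_{N-2}$, and the inhomogeneous equation at $r=0$, namely $x_{N-1}+x_{1}=1$---then pin down $x_{0}$ and $x_{1}$.

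The remaining work is a short case analysis on $N \bmod 4$. Substituting the closed forms for $x_{N-2}$ and $x_{N-1}$ (whose signs depend only on the parities of $N-2$ and $N-1$) into the two closure equations yields a $2\times 2$ linear system for $(x_{0},x_{1})$. For $N=4k+1$ one finds $x_{0}=x_{1}=\tfrac{1}{2}$; for $N=4k+2$ one finds $x_{0}=0$, $x_{1}=\tfrac{1}{2}$; and for $N=4k+3$ one finds $x_{0}=-\tfrac{1}{2}$, $x_{1}=\tfrac{1}{2}$. Feeding each pair back into $x_{2m}=(-1)^{m}x_{0}$, $x_{2m+1}=(-1)^{m}x_{1}$ reproduces exactly the period-four patterns $\tfrac{1}{2}(1,1,-1,-1,\dots)$, $\tfrac{1}{2}(0,1,0,-1,\dots)$, and $\tfrac{1}{2}(-1,1,1,-1,\dots)$ claimed in the statement. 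As a consistency check, the same computation with $N=4k$ makes the two closure equations read $0=0$ and $0=1$, an inconsistency that independently reconfirms non-invertibility in that case.

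The main obstacle is organizational rather than conceptual: one must use exactly the $N-2$ recursions $r=1,\dots,N-2$ to propagate the solution and reserve precisely the two equations at $r=0$ and $r=N-1$ to close the cycle, then track the $(-1)^{m}$ signs through the modular wrap-around correctly in each residue class. No delicate trigonometric summation is needed---the Fourier/eigenvalue route would instead require evaluating $x_{m}=\tfrac{1}{2N}\sum_{j}e^{2\pi i jm/N}/\cos(2\pi j/N)$ in closed form, which is exactly the kind of cosine sum that the recursion lets us sidestep.
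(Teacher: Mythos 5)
Your proposal is correct and follows essentially the same route as the paper: both solve $H_{1}^{c}x=e_{0}$ by the two-step recursion $x_{r+1}=-x_{r-1}$ and then close the cycle with the two remaining equations, splitting into cases on $N \bmod 4$. The only cosmetic difference is that the paper invokes the symmetry $x_{k}=x_{N-k}$ of a symmetric circulant to read off $x_{1}=\tfrac{1}{2}$ immediately from the first row, whereas you obtain $x_{0}$ and $x_{1}$ from the $2\times2$ closure system; both are sound.
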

\begin{figure}[H]
\raggedright{}\includegraphics[scale=0.22]{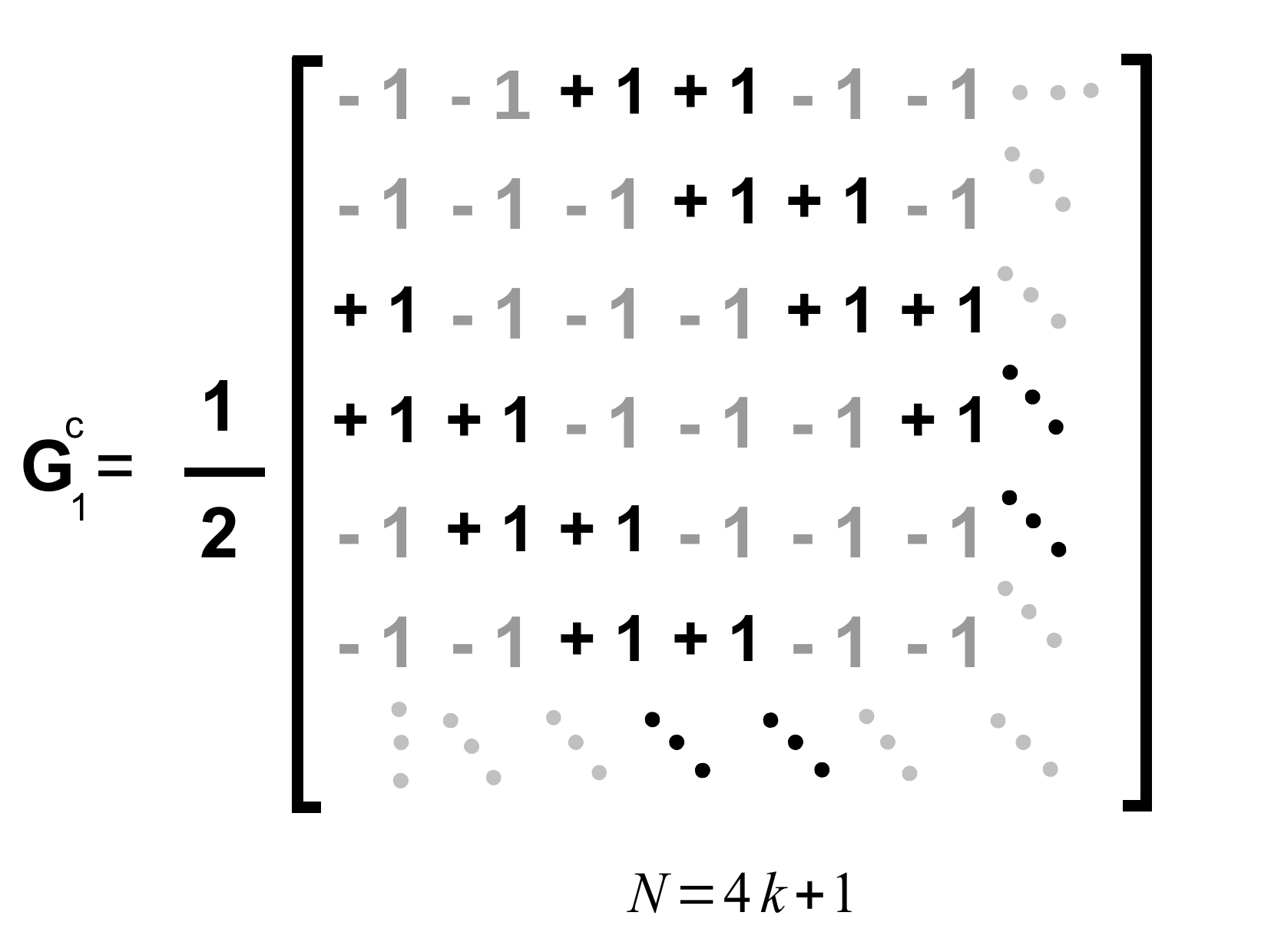}$\quad$\includegraphics[scale=0.22]{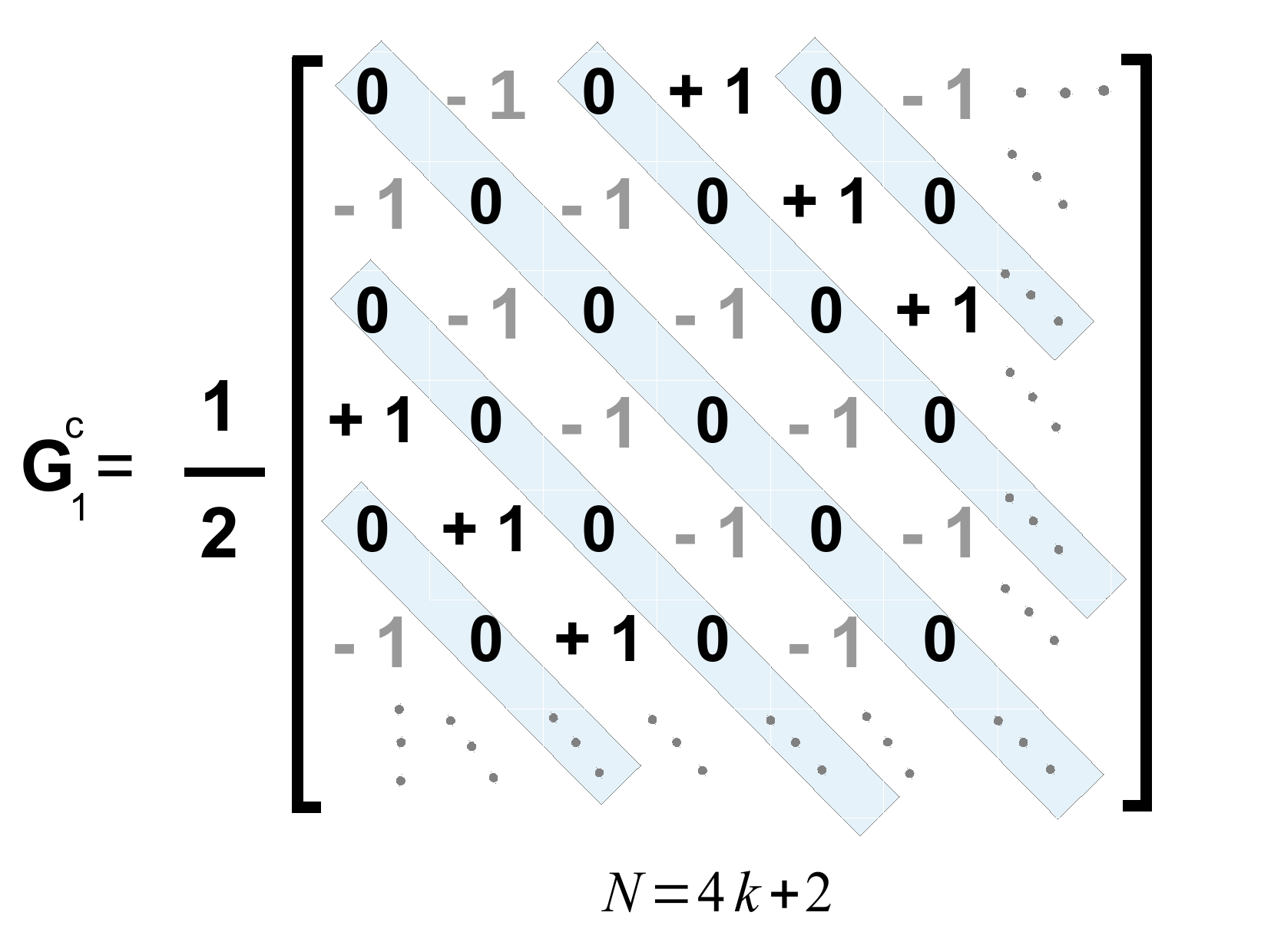}$\quad$\includegraphics[scale=0.22]{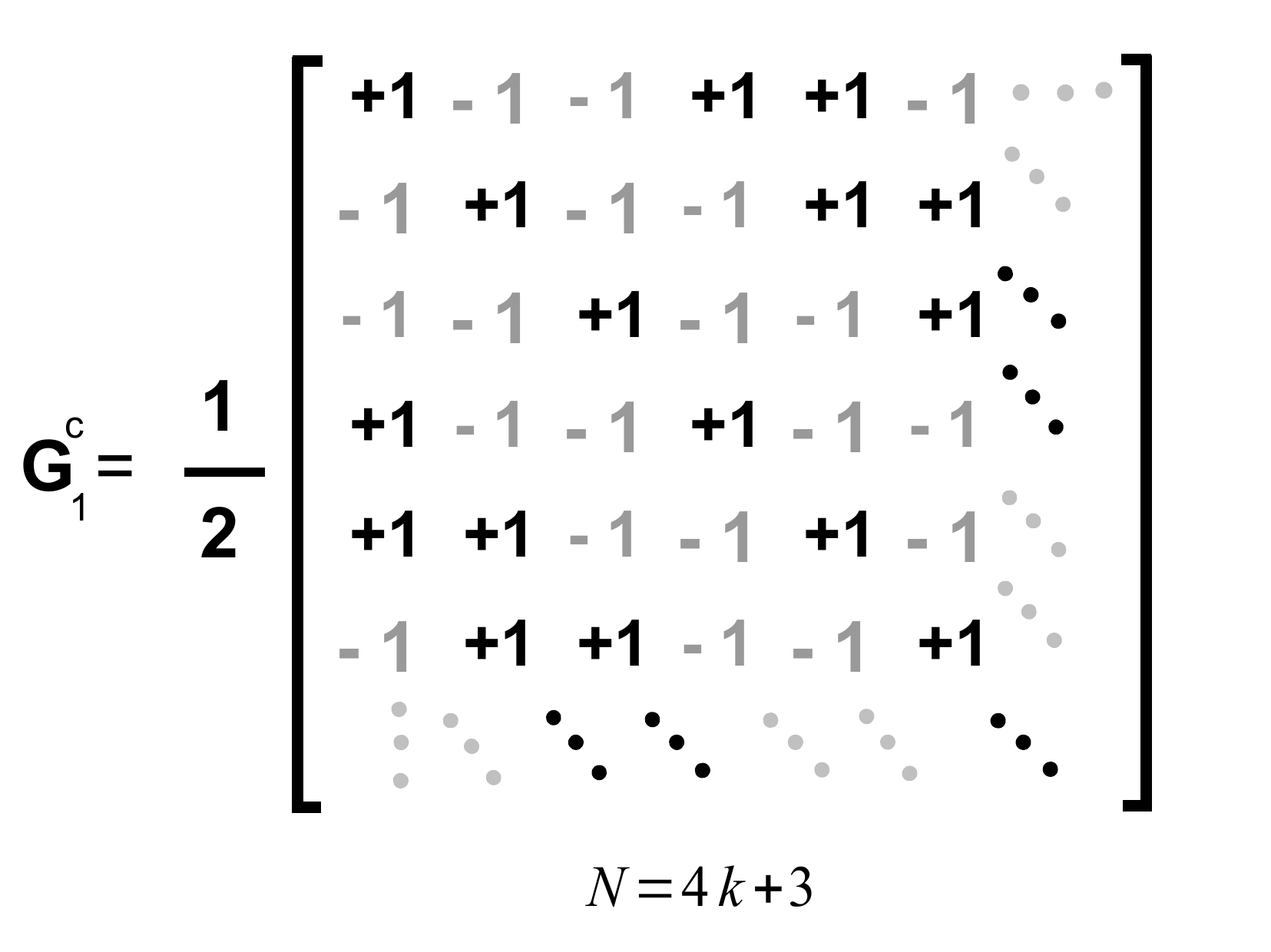}\protect\caption{\label{fig:Toeplitz-structure-of}The Toeplitz structure of $\mathbf{G}_{1}^{c}=-\left(H_{1}^{c}\right)^{-1}$. }
\end{figure}
\begin{proof}
$H_{1}^{c}$ (see Eq. \eqref{eq:Cycle6Atoms}), is a symmetric circulant
matrix, so its inverse is also a symmetric circulant. Thus $ x_k = x_{N-k} $
for $ 0 < k < N/2 $. The matrix $ H^c_1 $ (with two cyclic diagonals
of 1's) multiplies the first column $ (x_0, \ldots, x_{N-1}) $ of
its inverse to give the first column of the identity matrix: 
\[
\left[\begin{array}{ccccc}
0 & 1 &  &  & 1\\
1 & 0 & 1\\
 & 1 & 0 & \ddots\\
 &  & \ddots & \ddots & 1\\
1 &  &  & 1 & 0
\end{array}\right]\left[\begin{array}{c}
x_{0}\\
x_{1}\\
x_{2}\\
\vdots\\
x_{N-1}
\end{array}\right]=\left[\begin{array}{c}
1\\
0\\
0\\
\vdots\\
0
\end{array}\right]
\]

In the first row, symmetry changes $ x_1 + x_{N-1} = 1 $ to $ 2x_1 = 1 $
and $ x_1 = \frac{1}{2}$. Then the odd-numbered rows produce $ x_3, x_5, x_7, \ldots $
with alternating signs:
\begin{align}
x_{1}+x_{3} & =0\quad\mbox{so that }x_{3}=-\frac{1}{2}\nonumber \\
x_{3}+x_{5} & =0\quad\mbox{so that }x_{5}=+\frac{1}{2}\label{eq:Gil1}\\
\mbox{etc.}\nonumber 
\end{align}

The even numbered rows also produce alternating signs:
\begin{align}
x_{0}+x_{2} & =0\quad\mbox{so that }x_{2}=-x_{0}\label{eq:Gil2}\\
x_{2}+x_{4} & =0\quad\mbox{so that }x_{4}=+x_{0}\nonumber \\
\mbox{etc.}\nonumber 
\end{align}

Finally the last row gives $x_0 = -x_{N-2}$. This produces the three
separate possibilities for the inverse matrix in Proposition 2:
\[
\begin{array}{ccc}
N=4k+1 & \quad & x_{N-2}=x_{4k-1}=-\frac{1}{2}\mbox{ by Eq. \eqref{eq:Gil1} and then }x_{0}=\frac{1}{2}\\
N=4K+2 & \quad & x_{N-2}=x_{4k}=x_{0}\mbox{ by Eq. \eqref{eq:Gil2} and then }x_{0}=-x_{0}=0\\
N=4k+3 & \quad & x_{N-2}=x_{4k+1}=+\frac{1}{2}\mbox{ by Eq. \eqref{eq:Gil1} and then }x_{0}=-\frac{1}{2}.
\end{array}
\]
The alternating signs for $x_{0},x_{2},x_{4},\dots$ complete the
inverse circulant matrix $\left(H_{1}^{c}\right)^{-1}$. The Green's
matrix is defined as $\mathbf{G}_{1}^{c}\equiv-\left(H_{1}^{c}\right)^{-1}$
and is shown in Fig. \ref{fig:Toeplitz-structure-of}.\end{proof}
\begin{rem}
The same direct approach produces $H_{1}^{-1}$ in the non-circulant
case. The $(1,n)$ and $(n,1)$ entries of $H_{1}^{c}$ are now set
to zero. Then the first equation in Eq. \eqref{eq:Gil1} is simply $x_{1}=1$.
The other equations in Eq. \eqref{eq:Gil1} give alternating signs for
$x_{3},x_{5},\dots$ . 

Similarly, the last equation in Eq. \eqref{eq:Gil2} is now $x_{N-2}=0$.
The first column $(x_{0},x_{1},\dots)$ of $H_{1}^{-1}$ is seen to
be $(0,1,0,-1,\mbox{{\it repeat}})$. The last column of $H_{1}^{-1}$
has these components in reverse order. Then by symmetry we also know
the first and last rows of $H_{1}^{-1}$. 

Because $H_{1}$ is tridiagonal, these two columns and two rows completely
determine the rest of $H_{1}^{-1}$. On and above the main diagonal,
all sub-matrices of $H_{1}^{-1}$ have rank $1$. (If $H_{1}$ is
tridiagonal and invertible then $H_{1}^{-1}$ is a ``semi-separable''
matrix \cite{vandebril2007matrix}.) It is easy to see that starting
from the first and last rows and columns of $\mathbf{G}$ in Proposition
\ref{prop:BondAltern}, all other entries of $\mathbf{G}$ follow
directly from the rank $1$ requirement.\end{rem}
\begin{proof}
(alternative to Prop. \ref{prop:Cycle-Green's-function}) We now establish
the inverse of $H_{1}^{c}$ using a technique that is general to circulant
matrices based on the factorization of the symbol. $H_{1}^{c}=S+S^{-1}=S+S^{T}$,
where $S$ is the $N\times N$ cyclic shift matrix: $S^{N}=I$. Since
$S+S^{-1}=\left(\mathbb{I}-iS\right)\left(\mathbb{I}+iS\right)S^{-1}$,
we have 
\begin{eqnarray}
\left(S+S^{-1}\right)^{-1} & = & S\left(\mathbb{I}+iS\right)^{-1}\left(\mathbb{I}-iS\right)^{-1}\nonumber \\
 & = & S\frac{\left(\mathbb{I}+\left(-i\right)S+\cdots+\left(-i\right)^{N-1}S^{N-1}\right)}{1-\left(-i\right)^{N}}\frac{\left(\mathbb{I}+iS+\cdots+i^{N-1}S^{N-1}\right)}{1-i^{N}}\quad.\label{eq:GilAlternative}
\end{eqnarray}
The denominator is $\left(1-i^{N}\right)\left(1-\left(-i\right)^{N}\right)=\left\{ \begin{array}{c}
0\quad N\equiv4k\quad\\
2\quad N\equiv4k+1\\
4\quad N\equiv4k+2\\
2\quad N\equiv4k+3
\end{array}\right.$. This confirms that $H_{1}^{c}$ is singular for $N=4k$. 

The coefficient of $S^{N}$ in the product given by Eq. \eqref{eq:GilAlternative}
is the numerator: 
\[
i^{N-1}+\left(-i\right)i^{N-2}+\left(-i\right)^{2}i^{N-3}+\dots+\left(-i\right)^{N-2}i+\left(-i\right)^{N-1}=\frac{i^{N}-\left(-i\right)^{N}}{i-\left(-i\right)}=\begin{cases}
\begin{array}{c}
+1\quad N\equiv1\mbox{ }\left(\mbox{mod }4\right)\\
0\quad N\equiv2\mbox{ }\left(\mbox{mod }4\right)\\
-1\quad N\equiv3\mbox{ }\left(\mbox{mod }4\right)
\end{array} & .\end{cases}
\]
When we divide by the denominators $2,4,2$ we find the main diagonal
of $\left(H_{1}^{c}\right)^{-1}$ as the coefficients of $S^{N}=\mathbb{I}$
in $\left(S+S^{-1}\right)^{-1}$: $\frac{1}{2},0,-\frac{1}{2}$ for
$N$: $4k+1$, $4k+2,4k+3$ respectively. 

Now we find the coefficient of $S=S^{N+1}$ in Eq. \eqref{eq:GilAlternative}.
The numerator is: $1+i^{N-1}\left(-i\right)+i^{N-2}\left(-i\right)^{2}+\cdots+i\left(-i\right)^{N-1}=1+\left(i\right)\left(-i\right)\left[i^{N-2}+i^{N-1}\left(-i\right)+\dots+\left(-i\right)^{N-2}\right]$.
Simplifying the numerator we find $1+\frac{i^{N-1}-\left(-i\right)^{N-1}}{i-\left(-i\right)}=1,2,-1$
for $N$: $4k+1$, $4k+2,4k+3$ respectively. 

Dividing by $2,4,2$ in the denominator, we find $\frac{1}{2}$ on
the diagonals $\pm1$ of $\left(S+S^{-1}\right)^{-1}$.

Finally, notice that diagonals $2,3,4,5,\dots$ of $\left(S+S^{-1}\right)^{-1}$
will have \textit{opposite sign} to diagonals $0,1,2,3,\dots$. The
multiplication in the numerator of Eq. \eqref{eq:GilAlternative} gives
a cyclic convolution 
\[
\left(1,i,i^{2},\dots,i^{N-1}\right)\star\left(1,-i,\left(-i\right)^{2},\dots,\left(-i\right)^{N-1}\right)
\]
for the coefficients of $S,S^{2},\dots$ . Because $i^{2}=-1$, the
coefficient of $S^{k+2}$ in the numerator of Eq. \eqref{eq:GilAlternative}
is the negative of the coefficient of $S^{k}$. The denominators are
still $2,4,2$ for $N\equiv1,2,3$. So the pattern in $\left(S+S^{-1}\right)^{-1}=-\mathbf{G}_{1}^{c}$
(starting with the main diagonal) is:
\[
\begin{array}{ccc}
N\equiv4k+1 & \quad & \mbox{diagonals }\frac{1}{2},\frac{1}{2},-\frac{1}{2},-\frac{1}{2}\mbox{ repeated}\\
N\equiv4k+2 & \quad & \mbox{diagonals }0,\frac{1}{2},0,-\frac{1}{2}\mbox{ repeated }\\
N\equiv4k+3 & \quad & \mbox{diagonals }-\frac{1}{2},\frac{1}{2},\frac{1}{2},-\frac{1}{2}\mbox{ repeated.}
\end{array}
\]

This completes the alternative proof.
\end{proof}
Analogous to Prop. \ref{prop:BondAltern}, the proposition below gives
the form of the cyclic Hamiltonian with the special case of alternating
bond strengths and its corresponding Green's function. 
\begin{defn}
The cyclic bond alternating Hamiltonian is defined by ($N$ even)
\[
H_{alt}^{c}=\left[\begin{array}{ccccccc}
0 & \beta &  &  &  &  & \alpha\\
\beta & 0 & \alpha\\
 & \alpha & 0 & \beta\\
 &  & \beta & 0 & \ddots\\
 &  &  & \ddots & \ddots & \alpha\\
&  &  &  & \alpha & 0 & \beta\\
\alpha  &  &  &  &  & \beta & 0
\end{array}\right]
\]
\end{defn}
Comment: As before, this model is not circulant nor has it the Toeplitz
structure.
\begin{prop}
The entries of the Green's function $-\left(H_{alt}^{c}\right)^{-1}$,
are given by 
\begin{equation}
G\left(r,s\right)=-\frac{1}{4}\left\{ \begin{array}{cccc}
\frac{\left(-\alpha/\beta\right)^{\frac{r-s-1}{2}}}{\beta\left[1-\left(-\frac{\alpha}{\beta}\right)^{N/2}\right]}\left[1+\left(-1\right)^{r}\right]\left[1-\left(-1\right)^{s}\right] & + & \frac{\left(-\beta/\alpha\right)^{\frac{r-s-1}{2}}}{\alpha\left[1-\left(-\frac{\beta}{\alpha}\right)^{N/2}\right]}\left[1-\left(-1\right)^{r}\right]\left[1+\left(-1\right)^{s}\right] & \qquad r>s\\
\\
\frac{\left(-\alpha/\beta\right)^{\frac{N+r-s-1}{2}}}{\beta\left[1-\left(-\frac{\alpha}{\beta}\right)^{N/2}\right]}\left[1+\left(-1\right)^{r}\right]\left[1-\left(-1\right)^{s}\right] & + & \frac{\left(-\beta/\alpha\right)^{\frac{N+r-s-1}{2}}}{\alpha\left[1-\left(-\frac{\beta}{\alpha}\right)^{N/2}\right]}\left[1-\left(-1\right)^{r}\right]\left[1+\left(-1\right)^{s}\right] & \qquad r\le s
\end{array}\right.\label{eq:G_bondAlternation}
\end{equation}
\end{prop}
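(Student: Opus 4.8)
The plan is to exploit the bipartite (alternant) structure already emphasized in the Coulson--Rushbrooke discussion. Since $N$ is even, I would first relabel the sites so that the $N/2$ odd sites come first and the $N/2$ even sites come second. Because every bond joins an odd site to an even site, in this ordering $H_{alt}^{c}$ becomes block anti-diagonal,
\[
H_{alt}^{c}\;\cong\;\begin{bmatrix}0 & M\\ M^{T} & 0\end{bmatrix}.
\]
Tracking the couplings — odd site $2p-1$ bonds to even site $2p$ with weight $\beta$ and to even site $2p-2$ with weight $\alpha$, together with the corner bond $\alpha$ linking site $1$ to site $N$ — one reads off the $(N/2)\times(N/2)$ block $M=\beta\,\mathbb{I}+\alpha\,C$, where $C$ is the cyclic shift with $C^{N/2}=\mathbb{I}$. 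Whenever $M$ is invertible the block inverse is $\left(H_{alt}^{c}\right)^{-1}=\bigl[\begin{smallmatrix}0 & (M^{T})^{-1}\\ M^{-1} & 0\end{smallmatrix}\bigr]$, so the vanishing of all equal-parity entries is automatic and only the two sublattice blocks remain to be computed.

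Next I would invert $M$ by the same circulant/symbol idea used in the alternative proof of Proposition~\ref{prop:Cycle-Green's-function}. Factoring out $\beta$ and summing the truncated geometric series, which is legitimate because $C^{N/2}=\mathbb{I}$, gives
\[
M^{-1}=\frac{1}{\beta\left[1-(-\alpha/\beta)^{N/2}\right]}\sum_{m=0}^{N/2-1}\left(-\tfrac{\alpha}{\beta}\right)^{m}C^{m},
\]
which exhibits the invertibility condition as $(-\alpha/\beta)^{N/2}\neq1$ and produces the first denominator in the statement. For the transpose block I would instead factor $M^{T}=\beta\,\mathbb{I}+\alpha\,C^{-1}=\alpha C^{-1}\!\left(\mathbb{I}+\tfrac{\beta}{\alpha}C\right)$ and expand; this choice naturally yields the $(-\beta/\alpha)$ powers and the prefactor $1/\alpha$ of the second summand. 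Reading off the $(p,q)$ entry of $C^{m}$, which is nonzero exactly when $m\equiv p-q \pmod{N/2}$, collapses each block to a single power of $-\alpha/\beta$ or $-\beta/\alpha$.

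Finally I would translate back to the original labels via $p=r/2,\ q=(s+1)/2$ on the even--odd block and $p=(r+1)/2,\ q=s/2$ on the odd--even block. In both blocks the exponent $p-q$ becomes $(r-s-1)/2$, and the parity prefactors $[1\pm(-1)^{r}][1\mp(-1)^{s}]$ are precisely the indicators selecting which block (hence which of $-\alpha/\beta$, $-\beta/\alpha$) contributes; the overall factor $-\tfrac14$ cancels the $4$ these prefactors carry on their support. The one genuinely delicate point is the reduction modulo $N/2$: when $r>s$ the exponent $(r-s-1)/2$ already lies in $\{0,\dots,N/2-1\}$ and passes through unchanged, whereas when $r\le s$ it is negative and the wraparound $C^{N/2}=\mathbb{I}$ shifts it to $(r-s-1)/2+N/2=(N+r-s-1)/2$, which is exactly the exponent in the second case of the statement. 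Matching these two regimes correctly — rather than any hard estimate — is the main obstacle, and a short check at $N=4$, where $M=\bigl[\begin{smallmatrix}\beta&\alpha\\\alpha&\beta\end{smallmatrix}\bigr]$, confirms the signs and normalization.
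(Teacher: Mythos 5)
Your proof is correct, but it follows a genuinely different route from the paper's. The paper inverts $H_{alt}^{c}$ by direct substitution: it writes $H_{alt}^{c}\mathbf{x}=\mathbf{y}$, unrolls the even rows as a recursion expressing $x_{2k}$ through a partial geometric sum in $-\alpha/\beta$ plus a multiple of $x_{N}$, closes the loop using the cyclic boundary to solve for $x_{N}$, and repeats for the odd rows in terms of $x_{1}$. You instead permute to the bipartite ordering, obtain the block anti-diagonal form $\bigl[\begin{smallmatrix}0&M\\M^{T}&0\end{smallmatrix}\bigr]$ with $M=\beta\,\mathbb{I}+\alpha\,C$ a circulant of size $N/2$, and invert $M$ by the truncated geometric series $\left(\mathbb{I}+xC\right)^{-1}=\bigl[1-(-x)^{N/2}\bigr]^{-1}\sum_{m}(-x)^{m}C^{m}$ --- the same symbol-factorization idea the paper deploys only in the alternative proof of Proposition \ref{prop:Cycle-Green's-function}. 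Your version buys three things for free: the vanishing of all equal-parity entries (structural, rather than emerging from the indicator algebra), the invertibility criterion $(-\alpha/\beta)^{N/2}\neq1$ (which recovers $N\neq4k$ at $\alpha=\beta$, matching Lemma \ref{lem:Cycle-determinat}), and the wraparound exponent $(N+r-s-1)/2$ as an honest reduction mod $N/2$ rather than a case to be verified. The paper's recursion is more elementary and needs no relabeling, but it obscures why the two cases $r>s$ and $r\le s$ differ. One small caution in your writeup: the exponent on the odd--even block is not literally $p-q=(r-s+1)/2$; it becomes $(r-s-1)/2$ only after the extra factor of $C$ produced by your factorization $M^{T}=\alpha C^{-1}\!\left(\mathbb{I}+\tfrac{\beta}{\alpha}C\right)$ is absorbed, and it is worth also recording the identity $\frac{(-\beta/\alpha)^{(N+r-s-1)/2}}{\alpha[1-(-\beta/\alpha)^{N/2}]}=\frac{(-\alpha/\beta)^{(s-r-1)/2}}{\beta[1-(-\alpha/\beta)^{N/2}]}$, which shows your answer coincides with the stated transpose-symmetric form.
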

\begin{proof}
We obtain the inverse by solving for $\mathbf{y}$ in $H_{alt}^{c}\mathbf{x}=\mathbf{y}$;
that is, we think of $\mathbf{y}$ as given and we solve for $\mathbf{x}$.
This will give us $\mathbf{x}=\left(H_{alt}^{c}\right)^{-1}\mathbf{y}$.
First we solve the even rows in terms of the last row $x_{N}$, which
itself can be solved from $x_{N}=\sum_{i=1}^{N}\left[\left(H_{alt}^{c}\right)^{-1}\right]_{N,i}y_{i}$
to give
\begin{eqnarray*}
x_{2k} & = & \frac{1}{\beta}\{\sum_{m=0}^{k-1}\left(-\frac{\alpha}{\beta}\right)^{m}y_{2k-2m-1}\}+\left(-\frac{\alpha}{\beta}\right)^{k}x_{N}\\
x_{N} & = & \beta^{-1}\left[1-\left(-\alpha/\beta\right)^{N/2}\right]^{-1}\mbox{ }\sum_{m=0}^{\frac{N}{2}-1}\left(-\frac{\alpha}{\beta}\right)^{m}y_{N-2m-1}
\end{eqnarray*}

Similarly the odd rows are obtained in terms of $x_{1}$, which itself
can be solved $x_{1}=\sum_{i=1}^{N}\left[\left(H_{alt}^{c}\right)^{-1}\right]_{1,i}y_{i}$
to give
\begin{eqnarray*}
x_{2k+1} & = & \frac{1}{\alpha}\{\sum_{m=0}^{k-1}\left(-\frac{\beta}{\alpha}\right)^{m}y_{2k-2m}\}+\left(-\frac{\beta}{\alpha}\right)^{k}x_{1}\\
x_{1} & = & \alpha^{-1}\left[1-\left(-\beta/\alpha\right)^{N/2}\right]^{-1}\{y_{N}-\frac{\beta}{\alpha}\mbox{ }\sum_{m=0}^{\frac{N}{2}-1}\left(-\frac{\beta}{\alpha}\right)^{m}y_{N-2m-2}\}
\end{eqnarray*}

Combining these equations to solve for the even and odd rows separately
and multiplying by an overall minus sign we arrive at $\mathbf{G}=-\left(H_{alt}^{c}\right)^{-1}$
given by Eq. \eqref{eq:G_bondAlternation}.
\end{proof}
Comment: In the special case that $N=2k\left(2k-1\right)$, $G_{1}^{c}$
in Proposition \ref{prop:Cycle-Green's-function} can be obtained
from Eq. \eqref{eq:G_bondAlternation} by substituting $\alpha=\beta=1$.
Note that in this limit, it is necessary that $N\ne4k$ for the denominator
not to vanish in agreement with Lemma \ref{lem:Cycle-determinat}.

We now pose a more general (and difficult) question. When does the
inverse exist in spatial dimension $d$ and if it does, how can it
be computed? In the next section we use mathematical techniques borrowed
from quantum information theory and number theory to address some
of these problems.
\section{Higher Dimensional Green's Function}
The Green's function we derived is the negative of the inverse of
the H{\"u}ckel (tight binding) Hamiltonian, whose $N\times N$ matrix
representation in Dirac notation \cite{Dirac1967} is 
\begin{equation}
H_{1}=\sum_{k=1}^{N}\left\{ \mbox{ }|k\rangle\langle k+1|\mbox{ }+\mbox{ }|k+1\rangle\langle k|\mbox{ }\right\} \quad,\label{eq:H1D-1}
\end{equation}
where in units of $\beta$ the coupling can be taken to be one. 

To explore the $d-$dimensional analog $H_{d}$, we use tensor products
of matrices. Recall that the tensor product of an $m\times n$ matrix
$A$ and an $p\times q$ matrix $B$ is the $mp\times nq$ matrix
defined by 
\[
A\otimes B=\left[\begin{array}{cccc}
a_{11}B & a_{12}B & \cdots & a_{1n}B\\
a_{21}B & a_{22}B & \cdots & a_{2n}B\\
\vdots & \vdots &  & \vdots\\
a_{m1}B & a_{m2}B & \cdots & a_{mn}B
\end{array}\right].
\]

The Hamiltonian, $H_{d}$, on a square lattice in $d-$spatial dimensions
(square lattice in $d=2$, cubic in $d=3$, etc.), with the linear
size $N$ can succinctly be expressed as 
\begin{equation}
H_{d}=\sum_{i=1}^{d}\mathbb{I}_{N^{i-1}}\otimes H_{1}\otimes\mathbb{I}_{N^{d-i}}\label{eq:Hd}
\end{equation}
where $H_{1}$ is given by Eq. \eqref{eq:H1D-1}, and the size of every
identity matrix is indicated by its subscript. In dimensions $2$
and $3$, the Hamiltonians come from $H_{1}$ and $\mathbb{I}=\mathbb{I}_{N}$:
\begin{eqnarray}
H_{2} & = & \left(H_{1}\otimes\mathbb{I}\right)+\left(\mathbb{I}\otimes H_{1}\right)\label{eq:H2D}\\
H_{3} & = & \left(H_{1}\otimes\mathbb{I}\otimes\mathbb{I}\right)+\left(\mathbb{I}\otimes H_{1}\otimes\mathbb{I}\right)+\left(\mathbb{I}\otimes\mathbb{I}\otimes H_{1}\right)\quad.\label{eq:H3D}
\end{eqnarray}

Comment: The techniques apply more generally where the lattice can
be constructed from $d$ independent linear subsets.

Comment: When $H_{1}$ is a Toeplitz or a circulant matrix, the corresponding
$H_{d}$ is generally not a Toeplitz or a circulant matrix \cite{meurant1992review},
but they will be block Toeplitz or block circulant respectively.

The eigenvalue decomposition of $H_{1}=Q\Lambda Q^{T}$, where $\Lambda$
is the $N\times N$ diagonal matrix of eigenvalues whose $k^{\mbox{th}}$
entry is $2\cos k\omega$ and $Q$ is the matrix of eigenvectors with
$r^{\mbox{th}}$ column given by Eq. \eqref{eq:Evec}. Since $\cos k\omega\ne0$
for all $1\le k\le N$, $\Lambda$ is a diagonal matrix with no zero
entries on the diagonal and $H_{1}$ is invertible, i.e., has a Green's
function, as expected from our calculations.

The associated Green's function matrix in $d$ dimensions is defined
by $\mathbf{G}_{d}=-H_{d}^{-1}$. Obtaining an analytical expression
for the inverse in higher dimensions, at first, might seem difficult
because it involves sums of matrices. In $d=2$ the size of the lattice
is $N\times N$ and in $d=3$ the size is $N\times N\times N$. 

After the eigenvalue decomposition, the Hamiltonians in higher dimensions
(e.g., Eqs. \eqref{eq:H2D},\eqref{eq:H3D}) reads
\begin{eqnarray}
H_{d} & = & Q^{\otimes d}\left\{ \sum_{i=1}^{d}\mathbb{I}_{N^{i-1}}\otimes\Lambda\otimes\mathbb{I}_{N^{d-i}}\right\} \left(Q^{T}\right)^{\otimes d}\label{eq:Hd_EigDecom}\\
 & \equiv & Q^{\otimes d}\left\{ \quad\Lambda_{d}\quad\right\} \left(Q^{T}\right)^{\otimes d}\label{eq:Hd_EigDecom_Defined}
\end{eqnarray}
where the matrix of eigenvectors denoted by $Q^{\otimes d}\equiv Q\otimes\cdots\otimes Q$
is a $d$-fold tensor product and the diagonal matrix of eigenvalues
is $\Lambda_{d}=\sum_{i=1}^{d}\mathbb{I}_{N^{i-1}}\otimes\Lambda\otimes\mathbb{I}_{N^{d-i}}$.
For example,
\begin{eqnarray}
H_{2} & = & \left(Q\Lambda Q^{T}\otimes\mathbb{I}\right)+\left(\mathbb{I}\otimes Q\Lambda Q^{T}\right)\label{eq:H2D-1}\\
 & = & \left(Q\otimes Q\right)\mbox{ }\left[\left(\Lambda\otimes\mathbb{I}\right)+\left(\mathbb{I}\otimes\Lambda\right)\right]\mbox{ }\left(Q^{T}\otimes Q^{T}\right)\nonumber \\
H_{3} & = & \left(Q\Lambda Q^{T}\otimes\mathbb{I}\otimes\mathbb{I}\right)+\left(\mathbb{I}\otimes Q\Lambda Q^{T}\otimes\mathbb{I}\right)+\left(\mathbb{I}\otimes\mathbb{I}\otimes Q\Lambda Q^{T}\right)\label{eq:H3D-1}\\
 & = & \left(Q\otimes Q\otimes Q\right)\mbox{ }\left[\left(\Lambda\otimes\mathbb{I}\otimes\mathbb{I}\right)+\left(\mathbb{I}\otimes\Lambda\otimes\mathbb{I}\right)+\left(\mathbb{I}\otimes\mathbb{I}\otimes\Lambda\right)\right]\mbox{ }\left(Q^{T}\otimes Q^{T}\otimes Q^{T}\right).\nonumber 
\end{eqnarray}

This change of basis allows us to diagonalize the Hamiltonians in
any dimension, for example
\begin{eqnarray}
\Lambda_{2} & = & \left(\Lambda\otimes\mathbb{I}\right)+\left(\mathbb{I}\otimes\Lambda\right)\label{eq:H2D_diag}\\
\Lambda_{3} & = & \left(\Lambda\otimes\mathbb{I}\otimes\mathbb{I}\right)+\left(\mathbb{I}\otimes\Lambda\otimes\mathbb{I}\right)+\left(\mathbb{I}\otimes\mathbb{I}\otimes\Lambda\right).\label{eq:H3D_diag}
\end{eqnarray}

Below we investigate the conditions under which the Green's function
exists. For now suppose that it does. Its algebraic representation
in $d$ dimensions (compare with Eqs. \eqref{eq:Hd_EigDecom} and \eqref{eq:Hd_EigDecom_Defined})
is 
\begin{equation}
\mathbf{G}_{d}=-Q^{\otimes d}\mbox{ }\left\{ \sum_{i=1}^{d}\mathbb{I}_{N^{i-1}}\otimes\Lambda\otimes\mathbb{I}_{N^{d-i}}\right\} ^{-1}\mbox{ }\left(Q^{T}\right)^{\otimes d}.\label{eq:Gd}
\end{equation}
It is clear that if $\mathbf{G}_{d}$ were to exist no eigenvalue
can be zero. Namely, diagonal entries being all the possible sums
should satisfy $2\sum_{i=1}^{d}\cos\left(\frac{k_{i}\pi}{N+1}\right)\ne0$
for any choice of $1\le k_{i}\le N$.  Then, the corresponding eigenvalues
of $\mathbf{G}_{d}$ are $-1/\left\{ 2\sum_{i=1}^{d}\cos\left(\frac{k_{i}\pi}{N+1}\right)\right\} $.

As an illustration let us take $d=2$. Then the energies are the diagonal
entries of $\Lambda_{2}$ given by the sum 
\[
\Lambda_{2}=2\left[\begin{array}{cccc}
\left(\cos\omega\right)\mathbb{I}\\
 & \left(\cos2\omega\right)\mathbb{I}\\
 &  & \ddots\\
 &  &  & \left(\cos N\omega\right)\mathbb{I}
\end{array}\right]+\left[\begin{array}{cccc}
\Lambda\\
 & \Lambda\\
 &  & \ddots\\
 &  &  & \Lambda
\end{array}\right],
\]
which is a matrix of size $N^{2}\times N^{2}$; $\Lambda$ and $\left(\cos k\omega\right)\mathbb{I}$
are ($N\times N$). Since $\cos k\omega=-\cos\left[\left(N+1-k\right)\omega\right]$,
each block of the sum is $2\left(\cos k\omega\right)\mathbb{I}+\Lambda$
for some $1\le k\le N$ whose $\left(N+1-k\right)^{\mbox{th}}$ entry
is zero. Therefore, the diagonal $N^{2}\times N^{2}$ matrix $\Lambda_{2}$
has exactly $N$ zeros on its diagonal, one in each of the $N$ blocks,
and hence noninvertible.

\section{Green's function and number theory}
The existence of the Green's function, $\mathbf{G}_{d}$ , in higher
dimensions requires that $H_{d}$ has non-zero eigenvalues, i.e.,
$\sum_{i=1}^{d}\cos\left(\frac{k_{i}\pi}{N+1}\right)\ne0$ for any
choice of $1\le k_{i}\le N$. 
\begin{lem}
\label{The-Green's-function_Existence}$H_{d}^{-1}$ does not exist
in even spatial dimensions.\end{lem}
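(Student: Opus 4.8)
The plan is to prove the contrapositive of invertibility: for every even $d$ I would exhibit an explicit admissible index tuple $(k_1,\dots,k_d)$, each $1\le k_i\le N$, for which the associated eigenvalue $2\sum_{i=1}^{d}\cos(k_i\omega)$ vanishes. Since the eigenvalues of $H_d$ are precisely these sums (as read off from the diagonalization in Eq.~\ref{eq:Gd}), producing a single zero eigenvalue forces $\Lambda_d$ to have a zero on its diagonal, so $H_d$ is singular and $H_d^{-1}$ cannot exist. This is exactly the mechanism already witnessed in the $d=2$ illustration of the preceding section, and the task is to show it persists for all even $d$.

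The one fact I would lean on is the reflection identity used there: because $(N+1-k)\omega=\pi-k\omega$, we have $\cos((N+1-k)\omega)=-\cos(k\omega)$, and the involution $k\mapsto N+1-k$ maps $\{1,\dots,N\}$ bijectively onto itself. Thus every admissible cosine possesses, within the admissible range, a partner of equal magnitude and opposite sign. For $d=2m$ I would choose arbitrary $k_1,\dots,k_m\in\{1,\dots,N\}$ and then set $k_{m+j}=N+1-k_j$ for $j=1,\dots,m$; each such index again lies in $\{1,\dots,N\}$, so the full tuple is admissible. The corresponding eigenvalue then cancels in pairs,
\[
2\sum_{i=1}^{2m}\cos(k_i\omega)=2\sum_{j=1}^{m}\big[\cos(k_j\omega)+\cos((N+1-k_j)\omega)\big]=0,
\]
which completes the argument.

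I do not expect a genuine obstacle here: the entire content is the pairing construction, and the only point requiring care is the range check $1\le N+1-k_j\le N$, which is immediate from $1\le k_j\le N$. The more interesting remark is \emph{why} this easy argument fails for odd $d$: with an odd number of summands one cannot pair off every index, so a leftover single cosine $\cos(k\omega)$ prevents the automatic cancellation, and the vanishing of the sum becomes a genuine arithmetic condition on $N+1$ and $d$. This is precisely what motivates the number-theoretic machinery developed in the remainder of the section, and it is the reason the odd-dimensional case is treated separately rather than by the elementary involution used here.
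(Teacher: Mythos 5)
Your proposal is correct and follows essentially the same route as the paper: the paper's proof is exactly the pairing argument via $\cos(k\omega)=-\cos[(N+1-k)\omega]$, which for even $d$ lets one pair the cosines so each pair cancels, producing a zero eigenvalue. Your version just spells out the explicit tuple construction and the range check that the paper leaves implicit.
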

\begin{proof}
Since $\cos k\omega=-\cos\left[\left(N+1-k\right)\omega\right]$ for
any $1\le k\le N$, we can always pair up the cosines such that each
pair sums to zero implying that there is a zero eigenvalue.
\end{proof}
Therefore, below we take $d$ and $N+1$ to be odd (as $N$ odd is
already non-invertible in one dimension). 

We need to prove the general conditions under which $H_{d}$ is invertible,
which is a problem in number theory. Recently there has been quite
a bit of interest in a closely related question, which is under what
conditions do sums of roots of unity vanish? Besides sheer theoretical
interest, this problem is related to many mathematical structures.
For example, Poonen and Rubinstein relate this problem to the number
of interior intersection points made by the diagonals of a regular
$n-$gon \cite{Poonen1998}. 

Let us denote $n\equiv N+1$. Suppose one asks for what natural numbers
$d$ do there exist $n^{th}$ roots of unity $\alpha_{1},\dots,\alpha_{d}\in\mathbb{C}$
such that $\alpha_{1}+\cdots+\alpha_{d}=0$? Such an equation is said
to be a vanishing sum of $n^{th}$ roots of unity of \textit{weight
$d$}. Let $n$ have the prime factorization $p_{1}^{a_{1}}\cdots p_{r}^{a_{r}}$
($a_{i}>0$), then we can define $W\left(n\right)$ to be the set
of weights $d$ for which there exists a vanishing sum $\alpha_{1}+\cdots+\mbox{\ensuremath{\alpha}}_{d}=0$;
if the sum does not vanish then $W\left(n\right)$ is simply the empty
set. 

Before delving into the proof we introduce some notation and terminology
presented in \cite{Lam2000}. Let $\langle G\rangle$ be a cyclic
group of order $n$ and let $\zeta$ be a (fixed) primitive $n^{th}$
root of unity. There exists a natural ring homomorphism $\varphi$
from the integral group $\mathbb{Z}G$ to the ring of cyclotomic integers
$\mathbb{Z}\left[\zeta\right]$, given by the equation $\varphi\left(z\right)=\zeta$,
i.e., the map $\varphi:\mathbb{Z}G\rightarrow\mathbb{Z}\left[\zeta\right]$.
An element of $\mathbb{Z}G$, say $x=\sum_{g\in G}x_{g}g$, lies in
the kernel $\mbox{ker}\left(\varphi\right)$ if and only if $\sum_{g\in G}x_{g}\varphi\left(g\right)=0$
in $\mathbb{Z}\left[\zeta\right]$. Therefore, the elements of the
ideal $\mbox{ker}\left(\varphi\right)$ correspond precisely to all
$\mathbb{Z}$-linear relations among the $n^{th}$ roots of unity.
For vanishing sums of $n^{th}$ roots of unity, we have to look at
elements $x=\sum_{g}x_{g}g\in\mbox{ker}\left(\varphi\right)$ with
$x_{g}\ge0$; the number of non-zero coefficients $x_{g}$ is denoted
by $\epsilon_{0}\left(x\right)$. In other words one looks at $\mathbb{N}G\cap\mbox{ker}\left(\varphi\right)$,
where $\mathbb{N}G$ denotes the group semi-ring of $G$ over $\mathbb{N}$. 

A vanishing sum $\alpha_{1}+\cdots+\alpha_{d}=0$ is called \textit{minimal}
if no proper sub-sum is zero. Clearly, one can always multiply a vanishing
sum by a root of unity to get another vanishing sum; we say the latter
is\textit{ similar} to the former; i.e., one can be obtained from
the other by a rotation. For any natural number $n$, $\zeta_{n}$
denotes a primitive $n^{th}$ root of unity in $\mathbb{C}$. 

In terms of roots of unity, a vanishing sum from the basic relations
of the form 
\begin{equation}
1+\zeta_{p_{i}}+\zeta_{p_{i}}^{2}+\cdots+\zeta_{p_{i}}^{p_{i}-1}=0\qquad1\le i\le r\label{eq:symmetric}
\end{equation}
is called a \textit{symmetric} minimal elements in $\mathbb{N}G\cap\mbox{ker}\left(\varphi\right)$.
In general, there are vanishing minimal sums which are not similar
to those in Eq. \eqref{eq:symmetric}. The latter are called \textit{asymmetric} sums.

The following theorem due to Lam and Leung \ref{=00005BLam-and-Leung,}
\cite{Lam2000}, will help us prove our theorem pertaining to vanishing
sums of cosines (Theorem \ref{MAINTheorem_SumsCosines}).
\begin{thm*}
\label{=00005BLam-and-Leung,}{[}Lam and Leung, Theorem 4.8{]} Let
$G$ be a cyclic group of order $n=p_{1}^{a_{1}}p_{2}^{a_{2}}\cdots p_{r}^{a_{r}}$
, where $p_{1}<p_{2}<\cdots<p_{r}$ are primes and let $\varphi:\mathbb{Z}G\rightarrow\mathbb{Z}\left[\zeta\right]$
be as above, where $\zeta=\zeta_{n}$. For any minimal element $x\in\mathbb{N}G\cap\mbox{ker}\left(\varphi\right)$,
we have either (A) $x$ is symmetric, or (B) $r\ge3$ and $\epsilon_{0}\left(x\right)\ge p_{1}\left(p_{2}-1\right)+p_{3}-p_{2}>p_{3}$. 
\end{thm*}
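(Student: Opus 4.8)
The plan is to prove the stated dichotomy by induction on the number $r$ of distinct primes dividing $n$, working in the group ring $\mathbb{Z}G$. The structural input I would use is the classical fact that, \emph{as a $\mathbb{Z}$-module}, $\ker(\varphi)$ is spanned by the rotated prime relations $g\,P_i$: for each prime $p_i\mid n$ fix $h_i=g^{n/p_i}$ (of order $p_i$) and set $P_i=1+h_i+\cdots+h_i^{\,p_i-1}$, so that $\varphi(P_i)=1+\zeta_{p_i}+\cdots+\zeta_{p_i}^{\,p_i-1}=0$. The essential point is that, while \emph{every} relation is an integer combination of these blocks, a non-negative relation need not be a \emph{non-negative} combination of them; the symmetric elements of the theorem are by definition the single rotated blocks $g\,P_i$, so the content is to decide when minimality forces an element of $\mathbb{N}G\cap\ker(\varphi)$ to collapse onto one such block.

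For the base case $r=1$ (so $n=p_1^{a_1}$) each $g\,P_1$ is the indicator of a coset of the unique order-$p_1$ subgroup $\langle h_1\rangle$, and distinct cosets are disjoint. Hence any $x\in\mathbb{N}G\cap\ker(\varphi)$ is automatically a non-negative integer combination of disjointly supported blocks, and minimality forces $x=g\,P_1$ — alternative (A). For $r=2$, I would write $G\cong A\times B$ with $|A|=p_1^{a_1}$, $|B|=p_2^{a_2}$ by the Chinese Remainder Theorem and exploit the linear disjointness of $\mathbb{Q}(\zeta_{p_1^{a_1}})$ and $\mathbb{Q}(\zeta_{p_2^{a_2}})$: grouping the coefficients of $x$ by their $B$-component into fibres $f_j\in\mathbb{Z}[\zeta_{p_1^{a_1}}]$, the relation $\varphi(x)=\sum_j f_j\,\zeta_{p_2^{a_2}}^{\,j}=0$ can be expanded in the power basis of $\zeta_{p_2^{a_2}}$, which remains free over $\mathbb{Z}[\zeta_{p_1^{a_1}}]$ because the conductors are coprime; this separates the single relation into relations governed by one prime at a time. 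Minimality then collapses $x$ to a single rotated $p_1$- or $p_2$-block, again giving (A); this is the first genuinely nontrivial step.

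For the inductive step with $r\ge 3$, I would assume a minimal $x$ that is \emph{not} symmetric and establish the weight bound of alternative (B) by peeling off the largest prime. Factoring $G\cong A\times K$ with $|K|=p_r^{a_r}$ and $|A|=\bar n=n/p_r^{a_r}$ (which has $r-1$ prime factors), I would collect, for each $a\in A$, the $K$-fibre $y_a=\sum_k x_{(a,k)}\,k\in\mathbb{N}K$. Because $\gcd(\bar n,p_r)=1$ we have $\mathbb{Z}[\zeta_n]\cong\mathbb{Z}[\zeta_{\bar n}]\otimes_{\mathbb{Z}}\mathbb{Z}[\zeta_K]$, so expanding $\varphi(x)=\sum_a \zeta_{\bar n}^{\,a}\,\varphi_K(y_a)=0$ in the $\zeta_{\bar n}$-power basis over $\mathbb{Z}[\zeta_K]$ produces a system of vanishing relations constraining the fibres. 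The inductive hypothesis applied to the relations induced on the $\bar n$-factor controls how the non-empty fibres must be distributed, while the arithmetic of $p_r$ controls the mass those fibres carry; assembling the two lower bounds yields $\epsilon_0(x)\ge p_1(p_2-1)+p_3-p_2$. The concluding inequality $p_1(p_2-1)+p_3-p_2>p_3$ is elementary, being equivalent to $(p_1-1)(p_2-1)>1$.

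The hard part will be making this fibre count rigorous, because the reduction respects neither minimality nor non-negativity. A minimal relation over $G$ may fail to project to a minimal relation over $\bar n$, and extracting a usable $\bar n$-relation from $\sum_a\zeta_{\bar n}^{\,a}\varphi_K(y_a)=0$ requires applying $\mathbb{Z}$-linear functionals to the $\mathbb{Z}[\zeta_K]$-coefficients, which destroys the sign information the theorem depends on. The delicate work is therefore to show that \emph{asymmetry} of $x$ forces the induced $\bar n$-data to be genuinely spread out — ruling out the cheap scenario in which one full $p_r$-symmetric block absorbs all the asymmetry — and to track which fibres carry complete $p_r$-blocks versus honest remainders without double counting. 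It is exactly here that the three smallest primes $p_1<p_2<p_3$ enter the extremal bookkeeping; as a consistency check I would test the bound against the classical asymmetric sum at $n=30=2\cdot3\cdot5$, where it reads $p_1(p_2-1)+p_3-p_2=2\cdot2+5-3=6>5=p_3$.
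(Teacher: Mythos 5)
This statement is not proved in the paper at all: it is quoted verbatim from Lam and Leung (their Theorem~4.8) and used as an external input to the paper's own Theorem~2 on vanishing sums of cosines. So there is no internal proof to compare against; your proposal has to stand on its own as a proof of the Lam--Leung result, and as written it does not. The outline follows the same general strategy as the original (group-ring formulation, the R\'edei--de Bruijn--Schoenberg fact that $\ker(\varphi)$ is $\mathbb{Z}$-spanned by the rotated prime blocks $gP_i$, and induction on the number $r$ of prime factors by fibring over a prime-power component), and your base case $r=1$ is genuinely complete, since the blocks $gP_1$ have pairwise disjoint supports so non-negativity and minimality immediately force $x=gP_1$.

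The genuine gap is in the only part that carries the theorem's content: the lower bound $\epsilon_0(x)\ge p_1(p_2-1)+p_3-p_2$ for a minimal asymmetric $x$ when $r\ge 3$. You write that ``assembling the two lower bounds yields'' this inequality, but you never say what the two bounds are, which fibres contribute the factor $p_1(p_2-1)$ versus the summand $p_3-p_2$, or why asymmetry (rather than mere non-symmetry of some fibre) forces at least $p_2-1$ fibres of mass at least $p_1$ plus an extra fibre of mass at least $p_3-p_2$. That bookkeeping is the theorem; without it the inductive step proves nothing beyond what the induction hypothesis already gives. A secondary, smaller gap is the $r=2$ case: your linear-disjointness argument only shows that the fibre data satisfy relations over $\mathbb{Z}[\zeta_{p_1^{a_1}}]$, and extracting from this that a \emph{non-negative} minimal relation collapses to a single rotated block requires the de Bruijn structure theorem ($\mathbb{N}G\cap\ker(\varphi)=\sum_i\mathbb{N}\,gP_i$ for $r\le 2$), which you should either prove or cite explicitly rather than assert. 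Your consistency check at $n=30$ (bound $6$, matched by the weight-$6$ asymmetric sum built from $-\zeta_3-\zeta_3^2+\zeta_5+\zeta_5^2+\zeta_5^3+\zeta_5^4=0$) is correct and sharp, but a sharp example is not a proof of the bound.
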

We shall utilize this theorem to prove the following (recall that
$n=N+1$):
\begin{thm}
\label{MAINTheorem_SumsCosines}Let $n$ be a positive odd integer
and $k_{1}$, $k_{2}$, $\cdots$, $k_{d}$ be a set of integers such
that $1\le k_{i}\le n-1$. Then 
\begin{equation}
\sum_{i=1}^{d}\cos\left(\frac{k_{i}\pi}{n}\right)\ne0\label{eq:sumCosines}
\end{equation}
for any choice of $k_{i}$'s if and only if  $d$ is odd and is smaller
than the smallest divisor of $n$.\end{thm}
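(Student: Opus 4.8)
The plan is to translate the vanishing of the cosine sum into a vanishing sum of roots of unity and then invoke the Lam--Leung dichotomy quoted above. Since $n$ is odd, set $\zeta=\zeta_{2n}=e^{i\pi/n}$, so that $2\cos(k_i\pi/n)=\zeta^{k_i}+\zeta^{-k_i}$ and
\[ \sum_{i=1}^d \cos\!\left(\frac{k_i\pi}{n}\right)=0 \iff x:=\sum_{i=1}^d\bigl(g^{k_i}+g^{-k_i}\bigr)\in \mathbb{N}G\cap\ker\varphi, \]
where $G=\langle g\rangle$ is cyclic of order $2n$ and $\varphi(g)=\zeta$. Thus $x$ is a conjugation-invariant vanishing sum of total weight $2d$, and the relevant factorization is that of $2n=2\,p_1^{a_1}\cdots p_r^{a_r}$, whose smallest prime is $2$ because $n$ is odd; here $p_1$ denotes the smallest odd prime dividing $n$, i.e.\ the smallest divisor of $n$ exceeding $1$. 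The whole question becomes: for which weights $2d$, and with what parity of $d$, can such an $x$ exist.

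For the ``if'' direction I would decompose $x=\sum_\alpha x_\alpha$ into minimal vanishing sums and bound each weight using the Lam--Leung theorem for $G$ of order $2n$: a \emph{symmetric} minimal $x_\alpha$ is a rotation of a prime relation $1+\zeta_p+\cdots+\zeta_p^{\,p-1}$, hence has weight equal to a prime $p\mid 2n$, namely weight $2$ or some odd $p_i\ge p_1$; an \emph{asymmetric} minimal $x_\alpha$ can occur only when $2n$ has at least three primes, and then has weight $\ge 2(p_1-1)+(p_2-p_1)=p_1+p_2-2$. Assuming $d$ odd with $d<p_1$, the available weight is $2d<2p_1$. Since $p_1+p_2-2\ge 2p_1>2d$ (using $p_2\ge p_1+2$) no asymmetric component fits, and two odd-prime components would need weight $\ge 2p_1>2d$, so at most one fits. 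If none fits, every $x_\alpha$ is a weight-$2$ ``opposite pair'' $g^a+g^{a+n}$; since each such pair joins one exponent in $[1,n-1]$ to one in $[n+1,2n-1]$, matching exponents forces the multiplicities of $k$ and $n-k$ in $\{k_i\}$ to coincide, whence $d$ is even --- contradicting $d$ odd. If exactly one odd-prime component of weight $p_i$ appears, then $2d=2A+p_i$ with $A$ the number of pairs, which is impossible by parity. Either way no vanishing sum exists.

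For the converse I would exhibit a vanishing sum whenever $d$ is even or $d\ge p_1$. If $d$ is even, take $d/2$ opposite pairs $\cos(k\pi/n)+\cos((n-k)\pi/n)=0$. If $d$ is odd with $d\ge p_1$, write $d=p_1+2t$ and combine $t$ opposite pairs with a single rotated prime-$p_1$ relation: taking the real part of $\zeta^{s}\sum_{j=0}^{p_1-1}\zeta^{\,2nj/p_1}=0$ gives $\sum_{j=0}^{p_1-1}\cos\!\bigl((2nj/p_1+s)\pi/n\bigr)=0$, and one must choose the rotation $s$ so that every angle reduces to some $k\pi/n$ with $1\le k\le n-1$.

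The step I expect to be the main obstacle is exactly this choice of $s$. The $p_1$ exponents form the coset $s+\langle 2n/p_1\rangle$ in $\mathbb{Z}/2n$, and one must avoid the residues $0$ and $n$, which produce the disallowed values $\cos 0=1$ and $\cos\pi=-1$. Because $p_1$ is odd, $n\notin\langle 2n/p_1\rangle$, so these two residues lie in two distinct cosets accounting for $2p_1$ forbidden values of $s$; an admissible $s$ therefore exists precisely when $2n>2p_1$, that is, when $n$ is composite. The genuinely delicate boundary is $n=p_1$ prime, where the construction cannot avoid the forbidden residues --- and indeed $\cos(j\pi/n)$, $1\le j\le (n-1)/2$, are $\mathbb{Q}$-linearly independent there, so a vanishing sum forces $d$ even irrespective of its size. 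I would treat $n$ prime as a separate case and reconcile it with the ``smallest divisor'' formulation, which is where the theorem's boundary must be stated with most care.
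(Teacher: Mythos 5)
Your proposal follows the same overall strategy as the paper---rewrite $2\cos(k_i\pi/n)$ as $\zeta_{2n}^{k_i}+\zeta_{2n}^{-k_i}$ and apply the Lam--Leung theorem to a weight-$2d$ element of $\mathbb{N}G\cap\ker\varphi$ with $G$ cyclic of order $2n$---but your execution of the forward direction is cleaner and in one respect more careful. Where you decompose $x$ into minimal components and kill a putative odd-prime symmetric component by pure parity ($2d=2A+p$ with $p$ odd is impossible), the paper argues geometrically: a symmetric relation of prime length $p\ge p_2>d$ must, by pigeonhole among the $d$ conjugate pairs, contain a conjugate pair, hence be closed under conjugation, hence (having odd cardinality) contain a real root $\pm1$, which is excluded. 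Both arguments work; yours also quietly repairs a reversed inequality in the paper (it asserts $p_2+p_3-2<2p_2$, whereas the inequality actually needed to exclude asymmetric components is $p_2+p_3-2\ge 2p_2>2d$). Your reduction of the all-pairs case to ``the multiplicities of $k$ and $n-k$ coincide, hence $d$ is even'' is essentially the paper's even-$d$ lemma run in reverse, and is sound.

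The obstacle you flag in the converse is real, and it is not merely a gap in the paper's proof but a counterexample to the theorem as stated. The paper's construction rotates a symmetric $p$-relation and ``excludes the cosets containing $\pm1$,'' which tacitly requires $2n/p>2$, i.e.\ $n$ composite. For $n=p$ prime there are only two cosets of the order-$p$ subgroup in the $2p$-th roots of unity and each contains a real root; worse, as you observe, the numbers $\cos(j\pi/p)$ for $1\le j\le(p-1)/2$ are $\mathbb{Q}$-linearly independent (up to sign they are the conjugates $\zeta_p^m+\zeta_p^{-m}$, which are independent because $\zeta_p,\dots,\zeta_p^{p-1}$ form a $\mathbb{Q}$-basis of $\mathbb{Q}(\zeta_p)$), so any vanishing sum forces the multiplicities of $k$ and $p-k$ to agree and hence $d$ even, for every $d$. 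Concretely, $n=3$, $d=3$ gives only the values $\pm\tfrac{1}{2}$, three of which never sum to zero, even though $d\ge 3$, the smallest divisor of $3$. So the ``only if'' direction of the theorem (and the corollary on invertibility of $H_d$) requires $n$ composite, or a restated boundary; your instinct to treat $n$ prime as a separate case is exactly right.
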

\begin{proof}
By Lemma \ref{The-Green's-function_Existence}, we only need to consider
$d$ odd. Below we first work with roots of unity by writing the cosines
in terms of the roots 
\begin{equation}
\sum_{i=1}^{d}\cos\left(\frac{2k_{i}\pi}{2n}\right)=2\left\{ \sum_{i=1}^{d}\zeta_{2n}^{k_{i}}+\zeta_{2n}^{-k_{i}}\right\} .\label{eq:sumsOfCosines_MainThm}
\end{equation}
So we have now a sum over $2d$ roots of unity. We first prove that
this sum is never zero if $d<p_{2}$. Since $2n=2p_{2}^{a_{2}}p_{3}^{a_{3}}\cdots p_{r}^{a_{r}}$
with all the $p_{i}$'s being odd, we are guaranteed (from Theorem
\ref{=00005BLam-and-Leung,}) that $p_{1}\left(p_{2}-1\right)+p_{3}-p_{2}=p_{2}+p_{3}-2<2p_{2}$
therefore $2d\equiv\epsilon_{0}\left(x\right)<2p_{2}$ and if there
were vanishing sums they would be of type (A), which are symmetric,
i.e., sums of minimal relations. When $d<p_{2}$, in Eq. \eqref{eq:sumsOfCosines_MainThm}
there would be fewer than $2p_{2}$ points on the unit circle all
of which appear as complex conjugate pairs. For the sum to be of type
(A) and vanish, there should be a symmetric sum with a prime $p$
that vanishes. The corresponding roots are a subset of the original
points that are a vanishing sum of roots of $\zeta_{p}$ with the
prime $p\ge p_{2}>d$, therefore it would involve a vanishing sum
on more than half of the points of the original $2d$ terms in Eq.
\ref{eq:sumsOfCosines_MainThm}. Hence there must be at least one
complex conjugate pair in the vanishing sum under consideration. But
if there is one complex conjugate pair then all the roots should be
complex conjugates as we can rotate any of the $p^{\mbox{th}}$ roots
into one another. Since we have a vanishing sum of complex conjugate
pairs but we allow only an odd number of terms there must be a real
root. But we exclude the real roots ($\pm1$). Therefore we reach
a contradiction and the sum can never vanish. 
\begin{figure}
\begin{centering}
\includegraphics[scale=0.4]{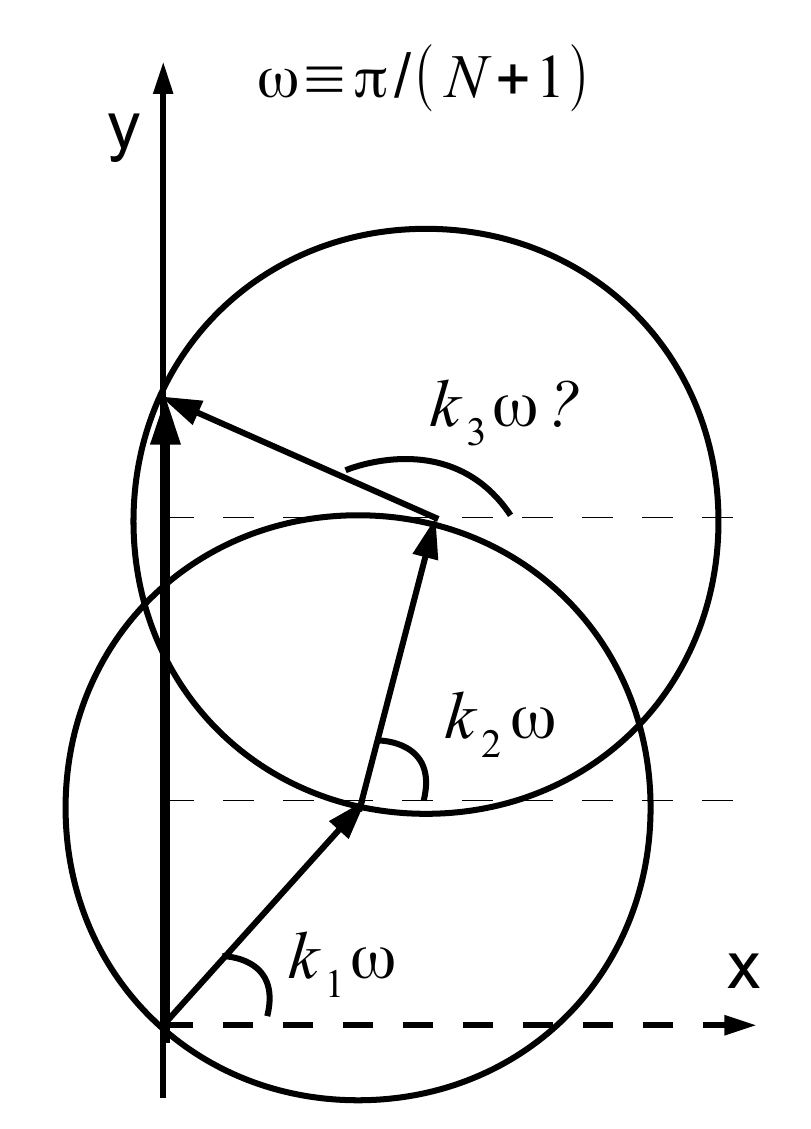}$\qquad\qquad\qquad$\includegraphics[scale=0.4]{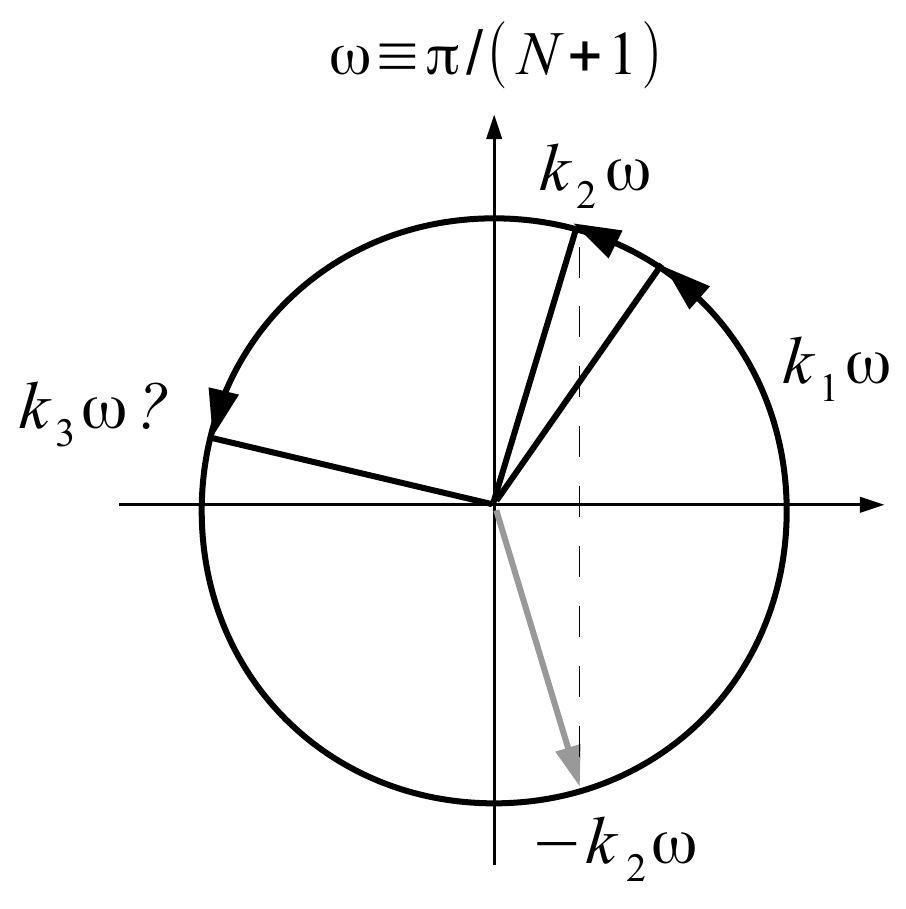}\protect\caption{\label{fig:SumofCosines}Left: $\cos\left(\frac{k_{1}\pi}{N+1}\right)+\cos\left(\frac{k_{2}\pi}{N+1}\right)+\cos\left(\frac{k_{3}\pi}{N+1}\right)=0$
is equivalent to the phasors adding to a vertical vector. The circles
shown are unit circles. Right: Vanishing sums of roots of unity imply
vanishing sums of cosines on the upper half plane since one can reflect
any phasor without changing the cosine.}
\par\end{centering}
\end{figure}

Now we prove that the sum can be zero if $d\ge p_{2}$. It is sufficient
to show that it vanishes for $d=p_{2}$ as for any odd $d>p_{2}$
we can always pair up the $2\left(p_{2}-d\right)$ cosines to cancel
as we did in the proof of Lemma \ref{The-Green's-function_Existence}.
Suppose $d=p_{2}$. Then a symmetric sum over the roots of unity that
vanishes implies that the sum over cosines vanishes as the cosines
are the real part and geometrically one can reflect the roots to the
upper half plane (see Fig. \ref{fig:SumofCosines}) . However, we
need to exclude the possibility of $\pm1$ as roots and show that
the sum still vanishes. The number of symmetric sums will be $\frac{2n}{p}$
but only $2$ of them have $\pm1$ as roots. In the sum involving
the symmetric sums we can exclude the ones that have $\pm1$ and still
be left with vanishing symmetric sums. \end{proof}
\begin{cor}
The inverse of the H{\"u}ckel matrix and hence its Green's function
in $d$ dimensions exists if and only if $d$ is odd and is smaller
than the smallest prime divisor of  $N+1$. 
\end{cor}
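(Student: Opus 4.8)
The plan is to deduce this corollary directly from Theorem \ref{MAINTheorem_SumsCosines} by translating invertibility of $H_d$ into the non-vanishing of a sum of cosines. First I would recall from the eigendecomposition in Eq.~\ref{eq:Gd} that $\mathbf{G}_d=-H_d^{-1}$ exists precisely when $H_d$ carries no zero eigenvalue. Since the diagonal of $\Lambda_d$ enumerates every sum of the form $2\sum_{i=1}^d\cos(k_i\pi/(N+1))$ as $(k_1,\dots,k_d)$ ranges over $\{1,\dots,N\}^d$, the Green's function exists if and only if
\[
\sum_{i=1}^d \cos\!\left(\frac{k_i\pi}{N+1}\right)\neq 0
\]
for every admissible choice of the $k_i$. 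This is exactly the hypothesis appearing in Theorem \ref{MAINTheorem_SumsCosines} with $n=N+1$, so the reduction is immediate.

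Next I would invoke that theorem verbatim. Setting $n=N+1$, it asserts that the sum never vanishes if and only if $d$ is odd and $d$ is smaller than the smallest divisor of $n$ exceeding $1$. The single translation required is the elementary observation that the least divisor larger than $1$ of any integer is prime: if $p$ were that least divisor and $p=ab$ with $1<a\le b<p$, then $a$ would be a strictly smaller nontrivial divisor, a contradiction. Hence ``smallest divisor of $N+1$'' and ``smallest prime divisor of $N+1$'' coincide, which delivers the corollary in the form stated.

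Finally I would close the parity gap, since Theorem \ref{MAINTheorem_SumsCosines} presumes $n=N+1$ odd. When $N$ is odd (so $N+1$ is even), one may choose every $k_i=(N+1)/2$, an integer lying in $[1,N]$, whereupon each summand equals $\cos(\pi/2)=0$ and $H_d$ acquires a zero eigenvalue for every $d$; this subsumes Lemma \ref{The-Green's-function_Existence}, whose pairing $\cos k\omega=-\cos[(N+1-k)\omega]$ already eliminates even $d$. Thus the full equivalence is valid under the standing assumption that $N+1$ is odd, a condition that must be appended to the corollary's hypothesis for the biconditional to be exact (otherwise the $N$-odd case would spuriously certify an invertible $H_1$).

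I expect the only genuine subtlety here to be bookkeeping rather than mathematics: essentially all the content has already been discharged by Theorem \ref{MAINTheorem_SumsCosines}, so the corollary is little more than its specialization to the Hückel spectrum. The main points requiring care are therefore making the ``smallest divisor'' versus ``smallest prime divisor'' identification explicit and flagging the required oddness of $N+1$, without which the stated equivalence would fail in the degenerate even case.
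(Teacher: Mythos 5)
Your proposal is correct and takes essentially the same route as the paper: the corollary is stated there as an immediate specialization of Theorem \ref{MAINTheorem_SumsCosines}, the reduction of invertibility of $H_d$ to the non-vanishing of every sum $\sum_{i=1}^{d}\cos\left(k_{i}\pi/(N+1)\right)$ having already been established via the eigendecomposition in Eq. \ref{eq:Gd}. Your extra bookkeeping --- identifying the smallest divisor exceeding $1$ with the smallest prime divisor, and noting that the biconditional presumes $N+1$ odd (the paper's standing assumption following Lemma \ref{The-Green's-function_Existence}, without which the case $N$ odd, $d=1$ would be a counterexample) --- is accurate and consistent with the paper's treatment.
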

For $d=3$, this lemma and Eq. \eqref{eq:sumCosines} have the geometrical
interpretation shown in Fig. \ref{fig:SumofCosines}. Moreover, twice
the left hand side of Eq. \eqref{eq:sumCosines} is the expression for
the energies of the H{\"u}ckel matrix in $d$ dimensions. 

\section{The physical consequences of the inverse of the Hückel matrix and
the zeroes of its Green\textquoteright s function}
The H{\"u}ckel formalism, in its physical and chemical context, is
not, of course, restricted to a linear chain. Various two- and three-dimensional
connectivities have been probed in the $80$ years of its existence,
to the immense benefit of practice and understanding in chemistry.
But until recent time, there has been scant interest in the Green\textquoteright s
function of the H{\"u}ckel matrix, and its inverse. Heilbronner used
the inverse of the H{\"u}ckel Matrix to form an undervalued bridge
between the resonance structure of valence bond theory and molecular
orbitals - thus bringing together two seemingly distinct, but in fact
related, approaches to the electronic structure of molecules \cite{Heilbronner1976}.
The graph theoretical context has led people to investigate the inverse
of the vertex adjacency matrix \cite{Farrugia2013}. In the work of
Estrada, the relationship between the Green\textquoteright s function
formalism and the inverse of the vertex adjacency matrix of a graph
is consistently utilized \cite{Estrade2007,Estrade2008}.

In a field that has attracted much attention both experimentally and
theoretically in the last decade, the transmission of current across
molecules, a striking phenomenon, quite nonclassical, is observed.
This is quantum interference, zero or low conductance when electrodes
are attached to specific sites across a molecule \cite{Datta2005,Solomon2013}.
Quantum interference occurs when the Green\textquoteright s function,
whose absolute value squared is related to the current transmitted,
vanishes. These are exactly the zeroes of Eq. \eqref{eq:Math}. The
inverse of the H{\"u}ckel matrix has been directly related to this
phenomenon in the work of Markussen and Stadler \cite{Markussen2011}.
The chemical consequences of just these zeroes have been outlined
in recent work by us \cite{tsuji2014quantum}. 

The results we have obtained in this paper for the specific entries of the Green's function have been proven of great utility in describing the transmission of current through molecules. The Green's function expressions obtained in this paper also play an important role in designing a new molecular switches based on electrocyclic reactions, cycloadditions, and sigmatropic reactions in linear polyenes \cite{Solomon2013,tsuji2014quantum}. In particular in a paper on linear polyenes, we have used the results to derive specifically the transmission across a chain, and its exponential falloff \cite{tsuji2015exponential}. In the same paper, where it was important to have the Green's function elements for a cyclic polyene (annulene) with and without bond alternation, expressions from the current paper and some based on similar procedures, were used. Two kinds of zero values of the Green's function introduced in this paper, namely easy zero and hard zero, provide the bases for an important classification of quantum interference, and the connection between these zeroes and the non-disjoint or disjoint nature of diradical molecules has been clarified \cite{tsuji2016close}. 

\section{Acknowledgements}
We thank Peter Shor for his help in proving Theorem \ref{MAINTheorem_SumsCosines}.
We also thank Christopher King, Jeffrey Goldstone, and Shev MacNamara. RM and RH thank National Science Foundation through
Grants DMS 1312831 and CHE 1305872 respectively, G.S. is grateful
for the ongoing support of MathWorks, and Y. T. thanks the Japan Society
for the Promotion of Science for a JSPS Postdoctoral Fellowship for
Research Abroad. RM thanks IBM Research for the support and the freedom provided by the Herman Goldstine fellowship.
\bibliographystyle{plain}
\bibliography{mybib}
\newpage{}

\section{Appendix: First principle derivation from trigonometrics}
\begin{prop*}
The following sum has a closed form solution 

\begin{equation}
-\frac{1}{N+1}\sum_{k=1}^{N}\frac{\sin\left(rk\omega\right)\sin\left(sk\omega\right)}{\cos\left(k\omega\right)}=e^{i\pi\left(\frac{r+s-1}{2}\right)}\label{eq:Math-1}
\end{equation}
when $r$ is even and $s<r$ is odd. Otherwise, when $s\le r$ it
is zero. Moreover $s>r$ are symmetric, i.e., $G\left(r,s\right)=G\left(s,r\right)$.
\end{prop*}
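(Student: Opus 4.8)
The plan is to prove the identity by a direct evaluation of the trigonometric sum, never invoking the matrix $H_1$. Write $n\equiv N+1$, so that $\omega=\pi/n$ and the summation runs over $k=1,\dots,n-1$; recall that $N$ even means $n$ is odd, which is exactly what keeps $\cos(k\omega)\neq 0$ throughout. The symmetry $G(r,s)=G(s,r)$ is immediate because the summand $\sin(rk\omega)\sin(sk\omega)/\cos(k\omega)$ is symmetric in $r,s$, so I may assume $s\le r$. First I would apply the product-to-sum identity $2\sin(rk\omega)\sin(sk\omega)=\cos((r-s)k\omega)-\cos((r+s)k\omega)$, reducing the problem to evaluating $S(m)\equiv\sum_{k=1}^{n-1}\cos(mk\omega)/\cos(k\omega)$ for the two integer frequencies $m=r-s$ and $m=r+s$.

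Next I would split on the parity of $r+s$. If $r$ and $s$ have the same parity (including $r=s$), then $r+s$ is even and I would show the whole sum vanishes by the involution $k\mapsto n-k$, which is a fixed-point-free bijection of $\{1,\dots,n-1\}$ precisely because $n$ is odd. Using $\cos((n-k)\omega)=-\cos(k\omega)$ and $\sin(r(n-k)\omega)=(-1)^{r+1}\sin(rk\omega)$, the summand at $k$ equals $-(-1)^{r+s}$ times the summand at $n-k$; when $r+s$ is even this factor is $-1$, so the terms cancel in pairs and $G(r,s)=0$. This disposes of the ``otherwise zero'' claim for same-parity indices in one stroke.

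For the remaining case $r>s$ with $r+s$ odd, both frequencies $m=r\pm s$ are odd, and I would use the finite expansion $\cos((2p+1)\theta)/\cos\theta=(-1)^p+2\sum_{j=1}^{p}(-1)^{p-j}\cos(2j\theta)$, which follows by unrolling the recursion $f(2p+1)=2\cos(2p\theta)-f(2p-1)$ coming from $\cos((2j+1)\theta)+\cos((2j-1)\theta)=2\cos(2j\theta)\cos\theta$. Summing term by term and using $\sum_{k=1}^{n-1}\cos(2\pi jk/n)=-1$ for every $1\le j\le n-1$ (the full geometric sum of $n$th roots of unity vanishes, and only the $k=0$ term is removed), I obtain the closed forms $S(r-s)=n(-1)^{\alpha}-1$ and $S(r+s)=n(-1)^{\beta}-1$ with $\alpha=(r-s-1)/2$ and $\beta=(r+s-1)/2$. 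Subtracting and dividing by $2n$ yields $G(r,s)=\tfrac12[(-1)^{\beta}-(-1)^{\alpha}]$; since $\beta-\alpha=s$, this is $0$ when $s$ is even (i.e. $r$ odd) and equals $(-1)^{\beta}=e^{i\pi(r+s-1)/2}$ when $s$ is odd (i.e. $r$ even), matching the claim exactly.

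The main obstacle is the odd-frequency case: one must (i) establish the correct finite Chebyshev-type expansion of $\cos(m\theta)/\cos\theta$ together with its alternating signs, and (ii) verify that the index $j$ appearing in the expansion never reaches a multiple of $n$, so that every interior cosine sum collapses to $-1$. The bound $\beta=(r+s-1)/2\le n-2<n$ (forced by $r,s\le n-1$) is what guarantees this, and it is the one place where the constraint $1\le k_i\le n-1$ is genuinely used.
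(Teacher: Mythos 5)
Your proof is correct, and for the substantive part it takes a genuinely different route from the paper's appendix. The same-parity zeros are handled identically in spirit: your involution $k\mapsto n-k$ is exactly the paper's pairing of the $k$-th with the $(N-k+1)$-th term. But for the remaining cases the paper re-centers the sum at $m=k-\tfrac{N+1}{2}$, converts everything to complex exponentials, clears the denominator with the factorization $\tfrac{1-x^{2\ell}}{1-x}=1+x+\cdots+x^{2\ell-1}$, and then grinds through Dirichlet-kernel identities for $\sum_n\cos(n\theta)$ and $\sum_n\sin(n\theta)$ --- doing this \emph{twice}, once for the ``harder zeros'' ($r$ odd, $s$ even) and once for the $\pm1$ entries ($r$ even, $s$ odd). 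You instead clear the denominator with the finite Chebyshev-type expansion $\cos((2p+1)\theta)/\cos\theta=(-1)^p+2\sum_{j=1}^{p}(-1)^{p-j}\cos(2j\theta)$ and reduce everything to the single elementary fact $\sum_{k=1}^{n-1}\cos(2\pi jk/n)=-1$ for $1\le j\le n-1$, arriving at the uniform closed form $S(m)=n(-1)^{(m-1)/2}-1$ for odd $m$; the subtraction $\tfrac{1}{2}\left[(-1)^{\beta}-(-1)^{\alpha}\right]$ with $\beta-\alpha=s$ then delivers the hard zeros and the $\pm1$ values in one stroke. I checked the details: the expansion is correct (verify at $p=0,1$ and by the three-term recursion), the sign sum $\sum_{j=1}^{p}(-1)^{p-j}$ gives $S(m)=(n-1)(-1)^p-2[p\ \mathrm{odd}]=n(-1)^p-1$, and the bound $p\le(r+s-1)/2\le n-2$ does guarantee no $j$ hits a multiple of $n$. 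Your version is shorter, stays entirely in real trigonometry, and unifies the paper's two separate hard computations; the paper's exponential machinery is heavier but produces the intermediate form (Eq.\ \ref{eq:Gmassaged}) that it reuses across both subsections.
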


\subsection{Easy Zeros: Same parity of $s$ and $r$}

Using $\sin\left(rk\omega\right)\sin\left(sk\omega\right)=\frac{1}{2}\left\{ \cos\left[\left(r-s\right)k\omega\right]-\cos\left[\left(r+s\right)k\omega\right]\right\} $,
we can rewrite Eq. \eqref{eq:Problem} as 
\begin{equation}
G\left(r,s\right)=-\frac{1}{2\left(N+1\right)}\sum_{k=1}^{N}\frac{\left\{ \cos\left[\left(r-s\right)k\omega\right]-\cos\left[\left(r+s\right)k\omega\right]\right\} }{\cos\left(k\omega\right)}\label{eq:Problem_Cosined}
\end{equation}

First let $r-s=2q$, this implies that $r$ and $s$ have the same
parity (i.e., oddness or evenness). Therefore $r+s$ is also even,
let it be $r+s=2q'$ for some $q'\in\mathbb{N}$. Eq. \eqref{eq:Problem_Cosined}
becomes

\begin{equation}
G\left(r,s\right)=-\frac{1}{2\left(N+1\right)}\sum_{k=1}^{N}\left\{ \frac{\cos\left[2qk\omega\right]}{\cos\left(k\omega\right)}-\frac{\cos\left[2q'k\omega\right]}{\cos\left(k\omega\right)}\right\} .\label{eq:problem1}
\end{equation}
We now show that each sum is zero. Let us first show $\sum_{k=1}^{N}\frac{\cos\left[2qk\omega\right]}{\cos\left(k\omega\right)}=0$.
Recall $\omega=\frac{\pi}{N+1}$ and expand the sum by adding the
first to the last then the second to $N-1^{\mbox{st}}$ etc. to get

\begin{eqnarray}
\sum_{k=1}^{N}\frac{\cos\left[2qk\omega\right]}{\cos\left(k\omega\right)} & = & \left(\frac{\cos\left[2q\omega\right]}{\cos\left(\omega\right)}+\frac{\cos\left[2qN\omega\right]}{\cos\left(N\omega\right)}\right)+\left(\frac{\cos\left[4q\omega\right]}{\cos\left(2\omega\right)}+\frac{\cos\left[2q\left(N-1\right)\omega\right]}{\cos\left(\left(N-1\right)\omega\right)}\right)\label{eq:Proof1}\\
 & + & \cdots+\left(\frac{\cos\left[2q\frac{N}{2}\omega\right]}{\cos\left(\omega N/2\right)}+\frac{\cos\left[2q\left(\frac{N}{2}+1\right)\omega\right]}{\cos\left(\left(\frac{N}{2}+1\right)\omega\right)}\right).\nonumber 
\end{eqnarray}
We now show that each of the parenthesis is identically zero. To do
so we notice that each of the parenthesis is of the form
\[
\frac{\cos\left[2qk\omega\right]}{\cos\left(k\omega\right)}+\frac{\cos\left[2q\left(N-k+1\right)\omega\right]}{\cos\left(\left(N-k+1\right)\omega\right)};\qquad k=1,2,\dots,\frac{N}{2}.
\]
But $\cos\left(\left(N-k+1\right)\frac{\pi}{N+1}\right)=\cos\left(-k\frac{\pi}{N+1}+\pi\right)=-\cos\left(k\frac{\pi}{N+1}\right)=-\cos\left(k\omega\right)$
by the double angle formula and evenness of the cosine. Moreover
\begin{align*}
\cos\left[2q\left(N-k+1\right)\omega\right] & =\cos\left[2q\pi-2q\frac{k\pi}{N+1}\right]=\cos\left[-2q\frac{k\pi}{N+1}\right]\\
 & =\cos\left[2q\frac{k\pi}{N+1}\right]=\cos\left(2qk\omega\right)
\end{align*}
Concluding that the numerators are equal but denominators differ in
sign resulting in 
\begin{align*}
\frac{\cos\left[2qk\omega\right]}{\cos\left(k\omega\right)}+\frac{\cos\left[2q\left(N-k+1\right)\omega\right]}{\cos\left(\left(N-k+1\right)\omega\right)} & =\frac{\cos\left[2qk\omega\right]}{\cos\left(k\omega\right)}-\frac{\cos\left[2qk\omega\right]}{\cos\left(k\omega\right)}=0\quad.
\end{align*}
The exact same argument with substitution $q'$ for $q$ in Eq. \eqref{eq:Proof1}
proves that the second sum in Eq. \eqref{eq:problem1} is zero. Together
proving $G\left(r,s\right)=0$ if $r$ and $s$ have the same parity.
There are other zeros that are harder to prove.

\subsection{Harder Zeros}

Let us make the sum in Eq. \eqref{eq:Problem} centered by letting $m=k-\frac{N+1}{2}$,
whereby

\[
G\left(r,s\right)=-\frac{1}{N+1}\sum_{m=-\frac{N-1}{2}}^{\frac{N-1}{2}}\frac{\sin\left[r\omega\left(m+\frac{N+1}{2}\right)\right]\sin\left[s\omega\left(m+\frac{N+1}{2}\right)\right]}{\cos\left[\omega\left(m+\frac{N+1}{2}\right)\right]}
\]
where $\omega=\frac{\pi}{N+1}$ as before. Since, $\cos\left[\omega\left(m+\frac{N+1}{2}\right)\right]=-\sin\left(\omega m\right)$
and $\sin x=\frac{1}{2i}\left(e^{ix}-e^{-ix}\right)$ 

\begin{eqnarray}
G\left(r,s\right) & = & -\frac{ie^{i\left(r+s\right)\pi/2}}{2\left(N+1\right)}\sum_{m=-\frac{N-1}{2}}^{\frac{N-1}{2}}e^{i\omega m\left(r+s-1\right)}\frac{\left(1-e^{-2ir\omega\left(m+\frac{N+1}{2}\right)}\right)\left(1-e^{-2is\omega\left(m+\frac{N+1}{2}\right)}\right)}{1-e^{-2i\omega m}}\nonumber \\
 & = & -\frac{ie^{i\left(r+s\right)\pi/2}}{2\left(N+1\right)}\sum_{m=-\frac{N-1}{2}}^{\frac{N-1}{2}}e^{i\omega m\left(r+s-1\right)}\frac{\left[1-\left(-1\right)^{r}e^{-2ir\omega m}\right]\left[1-\left(-1\right)^{s}e^{-2is\omega m}\right]}{1-e^{-2i\omega m}}\label{eq:Gmassaged}
\end{eqnarray}

This equation is general and will be used later for nonzero sums as
well. \\

Since we proved that if $r$ and $s$ have the same parity the sum
vanishes, we prove the harder zeros (see Eq. \eqref{eq:The-Green's-function})
by letting $r$ be odd and $s$ even and enforcing $s<r$. We can
let $r=2q-1$ and $s=2p$ with integers $p$ and $q$ satisfying $0<p<q\le N/2$.
Using these, Eq. \eqref{eq:Gmassaged} becomes
\[
G\left(r,s\right)=-\frac{e^{i\left(p+q\right)\pi}}{2\left(N+1\right)}\sum_{m=-\frac{N-1}{2}}^{\frac{N-1}{2}}e^{i2\omega m\left(p+q-1\right)}\frac{\left[1+e^{-2i\left(2q-1\right)\omega m}\right]\left[1-e^{-2i\left(2p\right)\omega m}\right]}{1-e^{-2i\omega m}}
\]

We now use the factorization $\frac{1-x^{2\ell}}{1-x}=1+x+x^{2}+\cdots+x^{2\ell-1}$
with $x\equiv\exp\left(-i2m\omega\right)$ to get rid of the denominator

\begin{eqnarray*}
G\left(r,s\right) & = & -\frac{\left(-1\right)^{p+q}}{2\left(N+1\right)}\sum_{m=-\frac{N-1}{2}}^{\frac{N-1}{2}}\left\{ e^{2i\omega m\left(p+q-1\right)}\left[1+e^{-2i\omega m\left(2q-1\right)}\right]\right.\\
 & \times & \left.\left[1+e^{-2i\omega m}+e^{-4i\omega m}+\cdots+e^{-2i\omega m\left(2p-1\right)}\right]\right\} 
\end{eqnarray*}

Multiplying the phase factor into the parenthesis inside the sum and
substituting for $\omega$ we have
\begin{equation}
G\left(r,s\right)=-\frac{\left(-1\right)^{p+q}}{2\left(N+1\right)}\sum_{m=-\frac{N-1}{2}}^{\frac{N-1}{2}}\left(e^{2i\left(p+q-1\right)\frac{\pi m}{N+1}}+e^{-2i\left(q-p\right)\frac{\pi m}{N+1}}\right)\left\{ 1+e^{-2i\frac{\pi m}{N+1}}+e^{-4i\frac{\pi m}{N+1}}+\cdots+e^{-2i\left(2p-1\right)\frac{\pi m}{N+1}}\right\} .\label{eq:HardZero_Final}
\end{equation}
Comment: The pre-factor multiplying the sum can only be $\frac{\pm1}{2\left(N+1\right)}$,
determined by the values of $p$ and $q$ : $G\left(r,s\right)$ vanishes
iff the sum does.\\

We expand the summand in Eq. \eqref{eq:HardZero_Final} to get
\begin{eqnarray}
G\left(r,s\right) & = & -\frac{\left(-1\right)^{p+q}}{2\left(N+1\right)}\sum_{m=-\frac{N-1}{2}}^{\frac{N-1}{2}}\left\{ \left[e^{2i\left(p+q-1\right)\frac{\pi m}{N+1}}+e^{2i\left(p+q-2\right)\frac{\pi m}{N+1}}+\cdots+e^{2i\left(q-p+1\right)\frac{\pi m}{N+1}}+e^{2i\left(q-p\right)\frac{\pi m}{N+1}}\right]\right.\nonumber \\
 & + & \left.\left[e^{-2i\left(q-p\right)\frac{\pi m}{N+1}}+e^{-2i\left(q-p+1\right)\frac{\pi m}{N+1}}+e^{-2i\left(q-p+2\right)\frac{\pi m}{N+1}}+\cdots+e^{-2i\left(p+q-2\right)\frac{\pi m}{N+1}}+e^{-2i\left(p+q-1\right)\frac{\pi m}{N+1}}\right]\right\} \nonumber \\
 & = & -\frac{\left(-1\right)^{p+q}}{\left(N+1\right)}\sum_{m=-\frac{N-1}{2}}^{\frac{N-1}{2}}\left\{ \cos\left[\frac{2\pi m\left(q-p\right)}{N+1}\right]+\cos\left[\frac{2\pi m\left(q-p+1\right)}{N+1}\right]+\cdots+\cos\left[\frac{2\pi m\left(p+q-1\right)}{N+1}\right]\right\} \label{eq:HardZeros_CosineForm}
\end{eqnarray}
where in the last equation, to get the cosines, we paired the first
term inside the first brackets with the last term inside the second
brackets etc. and used the formula $e^{ix}+e^{-ix}=2\cos x$. The
factor of $2$ cancelled the overall pre-factor $1/2$. \\

Comment: It is important to note that, since $q>p$, the exponents
in the first bracket are all positive and in the second bracket the
exponents are all negative. \\

We can write a more succinct expression

\begin{equation}
G\left(r,s\right)=-\frac{\left(-1\right)^{p+q}}{\left(N+1\right)}\mbox{ }\sum_{t=0}^{2p-1}\left\{ 2\sum_{m=\frac{1}{2}}^{\frac{N-1}{2}}\cos\left[\frac{2\pi m\left(q-p+t\right)}{N+1}\right]\right\} ,\label{eq:HardZeroSUM}
\end{equation}
where we used evenness of cosines, to let $m$ run from $1/2$, and
switched the order of the sums. We now prove that the sum inside braces
is $\left(-1\right)^{t}$ . Let $\theta=\frac{2\pi\left(q-p+t\right)}{N+1}$,
$n=m-1/2$ and $N'=\frac{N}{2}-1$ to rewrite the sum 
\[
2\sum_{m=\frac{1}{2}}^{\frac{N-1}{2}}\cos\left[\frac{2\pi m\left(q-p+t\right)}{N+1}\right]\equiv2\sum_{n=0}^{N'}\cos\left[\left(n+\frac{1}{2}\right)\theta\right]
\]
but $\cos\left[\left(n+\frac{1}{2}\right)\theta\right]=\cos\left(n\theta\right)\cos\left(\frac{\theta}{2}\right)-\sin\left(n\theta\right)\sin\left(\frac{\theta}{2}\right)$
and \cite{Wolfram} 
\begin{eqnarray}
\sum_{n=0}^{N'}\cos\left(n\theta\right) & = & \frac{\cos\left(\frac{N'\theta}{2}\right)\sin\left[\frac{\theta}{2}\left(N'+1\right)\right]}{\sin\left(\theta/2\right)}\label{eq:SumCosines}\\
\sum_{n=0}^{N'}\sin\left(n\theta\right) & = & \frac{\sin\left(\frac{N'\theta}{2}\right)\sin\left[\frac{\theta}{2}\left(N'+1\right)\right]}{\sin\left(\theta/2\right)}.\label{eq:SumSines}
\end{eqnarray}
Therefore $\sum_{n=1}^{N'}\cos\left(n+\frac{1}{2}\right)\theta=\sum_{n=1}^{N'}\left\{ \cos\left(n\theta\right)\cos\left(\frac{\theta}{2}\right)-\sin\left(n\theta\right)\sin\left(\frac{\theta}{2}\right)\right\} $
gives
\begin{eqnarray*}
2\sum_{n=0}^{N'}\cos\left(n+\frac{1}{2}\right)\theta & = & 2\left\{ \cos\left(\frac{\theta}{2}\right)\frac{\cos\left(\frac{N'\theta}{2}\right)\sin\left[\frac{\theta}{2}\left(N'+1\right)\right]}{\sin\left(\theta/2\right)}-\sin\left(\frac{\theta}{2}\right)\frac{\sin\left(\frac{N'\theta}{2}\right)\sin\left[\frac{\theta}{2}\left(N'+1\right)\right]}{\sin\left(\theta/2\right)}\right\} \\
 & = & 2\frac{\sin\left[\frac{\theta}{2}\left(N'+1\right)\right]}{\sin\left(\theta/2\right)}\left\{ \cos\left(\frac{N'\theta}{2}\right)\cos\left(\frac{\theta}{2}\right)-\sin\left(\frac{N'\theta}{2}\right)\sin\left(\frac{\theta}{2}\right)\right\} \\
 & = & 2\frac{\sin\left[\frac{\theta}{2}\left(N'+1\right)\right]}{\sin\left(\theta/2\right)}\left\{ \cos\left(\frac{\theta}{2}\left(N'+1\right)\right)\right\} =\frac{\sin\left[\left(N'+1\right)\theta\right]}{\sin\left(\theta/2\right)}=\frac{\sin\left[N\theta/2\right]}{\sin\left(\theta/2\right)}.
\end{eqnarray*}
However $\sin\left[N\theta/2\right]=\sin\left[\frac{\left(N+1\right)\theta}{2}-\frac{\theta}{2}\right]=\sin\left(\frac{\left(N+1\right)\theta}{2}\right)\cos\frac{\theta}{2}-\cos\left(\frac{\left(N+1\right)\theta}{2}\right)\sin\frac{\theta}{2}$.
But $\sin\left(\frac{\left(N+1\right)\theta}{2}\right)=0$, leaving
us with
\begin{equation}
\frac{\sin\left[N\theta/2\right]}{\sin\left(\theta/2\right)}=\frac{\sin\left[\frac{\pi N\left(q-p+t\right)}{N+1}\right]}{\sin\left(\frac{\pi\left(q-p+t\right)}{N+1}\right)}=-\cos\left(\frac{\left(N+1\right)\theta}{2}\right)=-\cos\left(\pi\left(q-p+t\right)\right)=-\left(-1\right)^{q-p+t}.\label{eq:SineRatio}
\end{equation}

Putting this back into the sum (Eq. \eqref{eq:HardZeroSUM})

\begin{eqnarray*}
G\left(r,s\right) & = & \frac{\left(-1\right)^{p+q}}{\left(N+1\right)}\mbox{ }\sum_{t=0}^{2p-1}\left\{ \left(-1\right)^{q-p+t}\right\} \\
 & = & \frac{\left(-1\right)^{2q}}{\left(N+1\right)}\mbox{ }\sum_{t=0}^{2p-1}\left(-1\right)^{t}=\frac{1}{\left(N+1\right)}\mbox{ }\sum_{t=0}^{2p-1}\left(-1\right)^{t}\quad;
\end{eqnarray*}
zero comes out because we are summing alternating $+1$'s and $-1$'s
an even number of times. This completes the proof of the harder zeros.
Note that we used $q>p$. For example if $q=p$, then $\cos\pi\left(q-p+t\right)$
would be $1$ for $t=0$ and the sum would give a $2p-1$ on that
term alone.

Recall $\frac{r+1}{2}=q$ and $\frac{s}{2}=p$ with integers $p$
and $q$ satisfying $0<p<q\le N/2$; for this choice
\[
G\left(r,s\right)=0\quad.
\]

\subsection{Nonzero entries: $\pm1$'s in the $\mathbf{G}$}

It remains to show that when $r$ is even and $s$ is odd, $G\left(r,s\right)$
is $\pm1$ as shown in Eq. \eqref{eq:The-Green's-function}. Let $r=2q$
and $s=2p-1$ with $p\le q$ (note that we allow for equality as well).
Using Eq. \eqref{eq:Gmassaged} and previous techniques we have
\begin{eqnarray*}
G\left(r,s\right) & = & -\frac{ie^{i\left(2\left(p+q\right)-1\right)\pi/2}}{2\left(N+1\right)}\sum_{m=-\frac{N-1}{2}}^{\frac{N-1}{2}}e^{i2\omega m\left(p+q-1\right)}\frac{\left[1-e^{-2i\left(2q\right)\omega m}\right]\left[1+e^{-2i\left(2p-1\right)\omega m}\right]}{1-e^{-2i\omega m}}\\
 & = & -\frac{ie^{i\left(2\left(p+q\right)-1\right)\pi/2}}{2\left(N+1\right)}\sum_{m=-\frac{N-1}{2}}^{\frac{N-1}{2}}\left(e^{2i\omega m\left(p+q-1\right)}+e^{-2i\omega m\left(p-q\right)}\right)\\
 & \times & \left\{ 1+e^{-2i\omega m}+e^{-4i\omega m}+\cdots+e^{-2i\left(2q-1\right)\omega m}\right\} .
\end{eqnarray*}
Once again we multiply the parenthesis into the braces to get (using
$ie^{i\left(2\left(p+q\right)-1\right)\pi/2}=e^{i\left(p+q\right)\pi}$)
\begin{eqnarray}
G\left(r,s\right) & = & -\frac{e^{i\left(p+q\right)\pi}}{2\left(N+1\right)}\sum_{m=-\frac{N-1}{2}}^{\frac{N-1}{2}}\left\{ \left[e^{2i\omega m\left(p+q-1\right)}+e^{2i\omega m\left(p+q-2\right)}+\cdots+e^{2i\omega m\left(p-q\right)}\right]\right.\label{eq:pm_expanded}\\
 & + & \left.\left[e^{-2i\omega m\left(p-q\right)}+e^{-2i\omega m\left(p-q+1\right)}+\cdots+e^{-2i\omega m\left(p+q-1\right)}\right]\right\} \nonumber 
\end{eqnarray}

Comment: Eq. \eqref{eq:pm_expanded} looks very similar to Eq. \eqref{eq:HardZeros_CosineForm};
however, it has a key difference. Since $q\ge p$, in either one of
the brackets there will be a term with exponent zero. For example,
if one looks at the first brackets the first term is $e^{2i\omega m\left(p+q-1\right)}$
, which clearly has a positive exponent; however, the last term $e^{2i\omega m\left(p-q\right)}$
has either zero or negative exponent. If it is negative, then a term
preceding it must have had zero exponent. Therefore, the sum for some
choice of $r$ and $s$ can look like
\begin{eqnarray}
G_{example}\left(r,s\right) & = & -\frac{e^{i\left(p+q\right)\pi}}{2\left(N+1\right)}\sum_{m=-\frac{N-1}{2}}^{\frac{N-1}{2}}\left\{ \left[e^{2i\omega m\left(p+q-1\right)}+\cdots+e^{2i\omega m}+1+e^{-2i\omega m}+\cdots+e^{2i\omega m\left(p-q\right)}\right]\right.\label{eq:pm_expanded-1}\\
 & + & \left.\left[e^{-2i\omega m\left(p-q\right)}+e^{2i\omega m}+1+e^{-2i\omega m}+\cdots+e^{-2i\omega m\left(p+q-1\right)}\right]\right\} \quad.\nonumber 
\end{eqnarray}
We can pair the terms to the left (right) of the $1$ in the first
bracket with those to the right (left) of the $1$ in the right bracket
to get the cosines as before. It is clear that the sum over $2$ contributes
a $2\left(N-1\right)$. We now show that the sum over the cosines
contributes a $4$, which together makes $2\left(N+1\right)$ and
cancels the denominator in the pre-factor.

For any $p$ and $q$, we can find a $t_{0}=q-p\ge0$ that makes the
exponent zero. In the first bracket, there are $q-p$ terms to its
left and there are $2q-\left(q-p+1\right)=p+q-1$ terms to its right
(for a total of $2q$ terms). We can pair the terms to its left with
the corresponding terms in the second bracket (now to the right of
the $1$) to get cosines and similarly pair terms to its right to
get cosines. Then, we can break the sum in the foregoing equation
to the sum over cosines obtained from terms to the left of $t_{0}$
in the first bracket, the sum over terms to its right and add a $2$
for the term itself. Namely
\begin{eqnarray}
G\left(r,s\right) & = & -\frac{e^{i\left(p+q\right)\pi}}{2\left(N+1\right)}\sum_{m=-\frac{N-1}{2}}^{\frac{N-1}{2}}\left\{ 2\sum_{t=1}^{q-p}\cos\left[2\omega mt\right]+2+2\sum_{t=1}^{p+q-1}\cos\left[2\omega mt\right]\right\} \label{eq:Grs_sum_pm1}\\
 & = & -\frac{e^{i\left(p+q\right)\pi}}{\left(N+1\right)}\sum_{m=-\frac{N-1}{2}}^{\frac{N-1}{2}}\left\{ \sum_{t=1}^{q-p}\cos\left[2\omega mt\right]+\sum_{t=1}^{p+q-1}\cos\left[2\omega mt\right]+1\right\} ,\nonumber 
\end{eqnarray}
where we cancelled the overall pre-factor of a $1/2$. Let us evaluate
each of the sums separately (switching order of summation, changing
variables as before) 
\begin{eqnarray*}
\sum_{m=-\frac{N-1}{2}}^{\frac{N-1}{2}}\cos\left[2\omega mt\right] & = & 2\sum_{n=0}^{\frac{N}{2}-1}\cos\left[2\omega t\left(n+\frac{1}{2}\right)\right]\\
 & = & 2\cos\left(\omega t\right)\sum_{n=0}^{\frac{N}{2}-1}\cos\left(2\omega tn\right)-2\sin\left(\omega t\right)\sum_{n=0}^{\frac{N}{2}-1}\sin\left(2\omega tn\right)
\end{eqnarray*}

The sum over cosines are evaluated using Eq. \eqref{eq:SumCosines},\ref{eq:SumSines}
\begin{eqnarray*}
\sum_{n=0}^{\frac{N}{2}-1}\cos\left(2\omega tn\right) & = & \frac{\cos\left(\left(\frac{N}{2}-1\right)\omega t\right)\sin\left[\omega t\left(N/2\right)\right]}{\sin\left(\omega t\right)}\\
\sum_{n=0}^{\frac{N}{2}-1}\sin\left(2\omega tn\right) & = & \frac{\sin\left(\left(\frac{N}{2}-1\right)\omega t\right)\sin\left[\omega t\left(N/2\right)\right]}{\sin\left(\omega t\right)}
\end{eqnarray*}
which together give 
\begin{eqnarray*}
\sum_{m=-\frac{N-1}{2}}^{\frac{N-1}{2}}\cos\left[2\omega mt\right] & = & 2\frac{\sin\left[\left(\frac{N}{2}\right)\omega t\right]}{\sin\left(\omega t\right)}\left\{ \cos\left(\left(\frac{N}{2}-1\right)\omega t\right)\cos\left(\omega t\right)-\sin\left(\left(\frac{N}{2}-1\right)\omega t\right)\sin\left(\omega t\right)\right\} \\
 & = & 2\frac{\sin\left[\left(\frac{N}{2}\right)\omega t\right]}{\sin\left(\omega t\right)}\left\{ \cos\left[\left(\frac{N}{2}\right)\omega t\right]\right\} =\frac{\sin\left(N\omega t\right)}{\sin\left(\omega t\right)}=\frac{\sin\left(\frac{\pi Nt}{N+1}\right)}{\sin\left(\frac{\pi t}{N+1}\right)}.
\end{eqnarray*}
We calculated this ratio in Eq. \eqref{eq:SineRatio} so we have 
\[
\frac{\sin\left(\frac{\pi Nt}{N+1}\right)}{\sin\left(\frac{\pi t}{N+1}\right)}=-\left(-1\right)^{t}\quad.
\]
Using this we can evaluate Eq. \eqref{eq:Grs_sum_pm1}
\[
G\left(r,s\right)=-\frac{e^{i\left(p+q\right)\pi}}{\left(N+1\right)}\left\{ -\sum_{t=1}^{q-p}\left(-1\right)^{t}-\sum_{t=1}^{p+q-1}\left(-1\right)^{t}+\sum_{m=-\frac{N-1}{2}}^{\frac{N-1}{2}}1\right\} .
\]
If $q-p$ is even then $q-p=2k$ for some $k$ and $p+q-1=2\left(k+p\right)-1$,
which is odd. Also if $q-p=2k-1$, then $p+q-1=2\left(p+k-1\right)$,
which is even. In either case one of the sums vanishes and the other
evaluates to be $-1$. Therefore,

\begin{eqnarray}
G\left(r,s\right) & = & -\frac{e^{i\left(p+q\right)\pi}}{\left(N+1\right)}\left\{ 1+\sum_{m=-\frac{N-1}{2}}^{\frac{N-1}{2}}1\right\} =-\frac{e^{i\left(p+q\right)\pi}}{\left(N+1\right)}\left(N+1\right)=-e^{i\left(p+q\right)\pi}\label{eq:pm1_pqform}
\end{eqnarray}

So we predict that if $p+q$ is odd then $G\left(r,s\right)=+1$ and
if $p+q$ is even then $G\left(r,s\right)=-1$. What does this mean
for $r$ and $s$? Let us cover all of the cases one by one. Recall
that $r$ is even and $s$ is odd and $q+p=\frac{r}{2}+\frac{s+1}{2}$
\begin{itemize}
\item $ $$p+q$ is even, $G\left(r,s\right)=-1$, and $q$ is even. This
means, $p$ is even. These implies that $r$ and $s+1$ are multiples
of $4$. Looking at $\mathbf{G}$, we see that these entries indeed
are $-1$.
\item $p+q$ is even, $G\left(r,s\right)=-1$, and $q$ is odd. This means
$p$ is odd. These imply that $r$ and $s+1$ are multiples of $2$
but not $4$. Looking at $\mathbf{G}$, we see that the rest of the
entries that are $-1$ have been covered.
\item $p+q$ is odd, $G\left(r,s\right)=+1$, and $q$ is even. This means
$p$ is odd. These imply that $r$ is a multiple of $4$ but $s+1$
is not (though of course even). This covers some of the $+1$'s in
$\mathbf{G}$.
\item Lastly, $p+q$ is odd, $G\left(r,s\right)=+1$, and $q$ is odd. This
means $p$ is even. These imply that $r$ is not a multiple of $4$
(though of course even), yet $s+1$ is a multiple of $4$. These cover
the rest of $+1$'s seen $\mathbf{G}$,.
\end{itemize}
The final result Eq. \eqref{eq:pm1_pqform} can be expressed in terms
of $r$ and $s$ as 
\begin{eqnarray}
G\left(r,s\right) & = & -e^{i\pi\left(\frac{r+s+1}{2}\right)}=e^{i\pi\left(\frac{r+s-1}{2}\right)}\quad.\label{eq:G_pm1_Final}
\end{eqnarray}

This completes our proof. \\

\begin{rem}
All the equations above for $G\left(r,s\right)$ were checked numerically.\end{rem}

\end{document}